\DeclareMathOperator*{\argmax}{arg\,max}
\newcommand{\hstar}{H^{*}}
\newcommand{\dstar}{d^{*}}
\newcommand{\dellhstar}{D_\ell \cap \hstar}
\newcommand{\crd}[1]{|#1|}
\newcommand{\calM}{{\cal M}}
\newcommand{\calI}{{\cal I}}
\newcommand{\denm}{\tt{DEN-M}}
\newcommand{\depden}{\mathtt{DEN{dep}}}
\newcommand{\rplus}{\mathbb{R}^{+}}
\newcommand{\natnum}{\mathbb{N}}
\newcommand{\geqs}{\geqslant}
\newcommand{\leqs}{\leqslant}
\newtheorem{claim}{Claim}[]
\newcommand{\eat}[1] {}
\newcommand{\cL}{{\cal L}}
\newcommand{\eps}{\epsilon}
\newcommand{\lp}{\mathrm{LP}}
\newcommand{\ip}{\mathrm{IP}}
\newcommand{\cp}{\mathrm{CP}}
\newcommand{\bfs}{\mathrm{BFS}}
\newcommand{\np}{\mathrm{NP}}
\newcommand{\val}{\mathrm{VAL}}
\newcommand{\tum}{{\mathrm{TUM}}}
\title{Density Functions subject to a Co-Matroid Constraint}
\runningtitle{Density Functions}
\author{Venkatesan T. Chakaravarthy\inst{1}, Natwar Modani\inst{1}, Sivaramakrishnan R Natarajan\inst{2}, Sambuddha Roy\inst{1}, Yogish Sabharwal\inst{1}}
\institute{1}{
IBM Research -- India\\
New Delhi
}
\affiliation{
  \email{\{vechakra,namodani,sambuddha,ysabharwal\}@in.ibm.com}
}  
\institute{2}{
IIT Madras \\
Chennai, India
}
\affiliation{
\email{sivaramakrishnan.n.r@gmail.com}
}
\title{Density Functions subject to a Co-Matroid Constraint\footnote{Work done by the third author while he was interning at IBM Research}}
\titlerunning{Density Functions} 
\author[1]{Venkatesan T. Chakaravarthy}
\author[1]{Natwar Modani}
\author[2]{Sivaramakrishnan R. Natarajan}
\author[1]{Sambuddha Roy}
\author[1]{Yogish Sabharwal}
\affil[1]{IBM Research, New Delhi, India\\
  \texttt{\{vechakra,namodani,sambuddha,ysabharwal\}@in.ibm.com}}
\affil[2]{IIT Chennai, India\\
  \texttt{sivaramakrishnan.n.r@gmail.com}}
\authorrunning{V. Chakaravarthy et. al} 
\begin{document}

\maketitle
\begin{abstract}
In this paper we consider the problem of finding the {\em densest} subset subject to {\em co-matroid constraints}. We are given a {\em monotone supermodular} set function $f$ defined over a universe $U$, and the density of a subset $S$ is defined to be $f(S)/\crd{S}$. This generalizes the concept of graph density. Co-matroid constraints are the following: given matroid $\calM$ a set $S$ is feasible, iff the complement of $S$ is {\em independent} in the matroid. Under such constraints, the problem becomes $\np$-hard. The specific case of graph density has been considered in literature under specific co-matroid constraints, for example, the cardinality matroid and the partition matroid. We show a $2$-approximation for finding the densest subset subject to co-matroid constraints. Thus, for instance, we improve the approximation guarantees for the result for partition matroids in the literature.
\end{abstract}

\section{Introduction}

In this paper, we consider the problem of computing the densest subset with respect to a {\em monotone supermodular} function 
subject to {\em co-matroid} constraints.
Given a universe $U$ of $n$ elements, a function $f: 2^{U} \rightarrow \rplus$ is {\em supermodular} iff 
\[
f(A) + f(B) \leqslant f(A\cup B) + f(A \cap B)
\]
for all $A, B \subseteq U$. If the sign of the inequality is reversed for all $A, B$, then we call the function 
{\em submodular}.
The function $f$ is said to be monotone 
 if $f(A) \leqslant f(B)$ whenever $A\subseteq B$; we assume $f(\emptyset) = 0$. 
We define a {\em density function} $d: 2^{U} \rightarrow \rplus$ as $d(S) \triangleq f(S)/\crd{S}$.
Consider the problem of maximizing the density function $d(S)$ given oracle access to the function $f$.
We observe that the above problem can be solved in polynomial time (see Theorem~\ref{BBB}).

The main problem considered in this paper is to maximize $d(S)$ subject to  certain constraints
that we call {\em co-matroid} constraints.
In this scenario, 
we are given a {\em matroid} $\calM = (U, \calI)$ where $\calI\subseteq 2^{U}$ is the family of {\em independent} sets (we give the formal definition of a matroid in Section~\ref{defs}).
A set $S$ is considered feasible iff  the complement of $S$ is {\em independent} i.e. $\overline{S} \in \calI$.  
The problem is to find the densest feasible subset $S$ given oracle access to $f$ and $\calM$. 
We denote this problem as {\denm}.

We note that even special cases of the {\denm} problem are $\np$-hard~\cite{ks09}.
The main result in this paper is the following:
\begin{theorem}
\label{main.thm}
Given a monotone supermodular function $f$ over a universe $U$, and a
matroid ${\cal M}$ defined over the same universe, there is a $2$-approximation 
algorithm for the {\denm} problem.
\end{theorem}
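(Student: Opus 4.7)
My plan is to combine a binary search on the target density with a greedy ``reverse peeling'' algorithm, whose analysis exploits both the supermodular marginal structure of $f$ and the exchange property of the matroid $\calM$.

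First, since the densities $f(S)/\crd{S}$ range over a polynomially bounded set of rationals, it suffices to guess the optimal density $\alpha^{*} = \opt$ by binary search (or by enumerating $O(n^{2})$ candidate values). For each guess $\alpha$, the subroutine I design either returns a feasible $S$ with $d(S) \geqs \alpha/2$ or certifies that no feasible set has density at least $\alpha$.

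Second, the subroutine initializes $S_{0} = U$, which is feasible since $\overline{S_{0}} = \emptyset \in \calI$, and iteratively removes one element at a time. At step $t$ it picks $e_{t} \in S_{t}$ minimizing the marginal $f(S_{t}) - f(S_{t} \setminus \{e_{t}\})$ subject to the admissibility constraint $\overline{S_{t}} \cup \{e_{t}\} \in \calI$; it halts when $\overline{S_{t}}$ is a basis of $\calM$, and returns the densest $S_{t}$ encountered.

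Third, for the approximation analysis, fix an optimal feasible $S^{*}$ with $d(S^{*}) = \alpha^{*}$, and let $t^{*}$ be the first iteration at which the algorithm removes an element of $S^{*}$; then $S_{t^{*}} \supseteq S^{*}$. The standard marginal argument shows that each $v \in S^{*}$ has marginal at least $\alpha^{*}$ inside $S^{*}$ (else $S^{*} \setminus \{v\}$ would be denser), and supermodularity of $f$ lifts this to: $v$ has marginal at least $\alpha^{*}$ inside $S_{t^{*}}$. Because the algorithm picked an element of $S^{*}$ as the minimum-marginal admissible element at this step, every admissible element of $S_{t^{*}}$ has marginal at least $\alpha^{*}$. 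A matroid exchange between $\overline{S_{t^{*}}}$ and $\overline{S^{*}}$ should then allow us to pair each inadmissible element of $S_{t^{*}}$ with an admissible one of no smaller marginal, and combining this pairing with the supermodular inequality $\sum_{v \in S}[f(S) - f(S \setminus \{v\})] \geqs f(S)$ is intended to yield $d(S_{t^{*}}) \geqs \alpha^{*}/2$.

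The hard part is this last step. For the classical densest-subgraph problem the factor $2$ arises from the identity $\sum_{v} d_{S}(v) = 2 f(S)$; but for a general supermodular $f$ only the one-sided inequality $\sum_{v}[f(S) - f(S \setminus \{v\})] \geqs f(S)$ holds, so a factor-$2$ bound on the density cannot be read off from the minimum marginal alone. The main technical work will thus be to carefully combine the matroid exchange property (to account for the inadmissible elements forced into $S_{t^{*}}$ by the independence constraint on $\overline{S_{t^{*}}}$) with a supermodular pairing that relates $f(S_{t^{*}})$ to the \emph{typical} marginal of its elements, rather than merely to the minimum one.
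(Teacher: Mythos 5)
Your proposal takes a genuinely different algorithmic route from the paper. The paper runs a forward greedy: it sets $D_1 = H_1 = \argmax_X f(X)/\crd{X}$, then repeatedly adjoins the disjoint set $H_{i+1}$ maximizing the marginal density $\frac{f(D_i+X)-f(D_i)}{\crd{X}}$, pads each $D_i$ to a minimal feasible $D'_i$ by solving a small matroid ``Extension Problem'', and outputs the densest $D'_i$. The analysis fixes the index $\ell$ where the marginal density first falls below $\dstar/2$, proves Claim~\ref{GGG} ($f(D_\ell)-f(\dellhstar) \geqs \frac{\dstar}{2}(\crd{D_\ell}-\crd{\dellhstar})$) and Claim~\ref{HHH} ($f(\dellhstar)\geqs \frac{\dstar}{2}\crd{\dellhstar}+\frac{\dstar}{2}\crd{\hstar}$), combines them to get $f(D_\ell)\geqs\frac{\dstar}{2}(\crd{D_\ell}+\crd{\hstar})$, and uses $r\leqs\crd{\hstar}$ for the padding. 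Nowhere does it reason about per-element marginals of the \emph{constrained} optimum. You instead peel $S_0=U$ down, removing at each step the admissible element of minimum marginal.

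There is a concrete gap in your analysis, independent of the part you flag as unfinished. You assert that ``each $v\in S^*$ has marginal at least $\alpha^*$ inside $S^*$ (else $S^*\setminus\{v\}$ would be denser)''. This reasoning is valid for the \emph{unconstrained} densest set, but $S^*$ is only optimal among feasible sets, and co-matroid feasibility is upward monotone: $S^*\setminus\{v\}$ may simply be infeasible, so it being denser is no contradiction. A counterexample: let $G$ be a clique $K_t$ together with one isolated vertex $u$, with $f(S)=\crd{E(S)}$ and the cardinality co-matroid constraint $\crd{S}\geqs t+1$ (so $S=U$ is the only feasible set). Then $S^*=U$ and $\alpha^*=\binom{t}{2}/(t+1)>0$, yet the marginal of $u$ inside $S^*$ is $0$. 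Since this lower bound on marginals is the pivot from which you infer ``every admissible element of $S_{t^*}$ has marginal $\geqs\alpha^*$'', the downstream matroid-exchange pairing cannot recover. You also correctly identify a second obstruction: for general monotone supermodular $f$ only $\sum_{v\in S}[f(S)-f(S\setminus\{v\})]\geqs f(S)$ holds, not the tight factor-$2$ identity enjoyed by $\crd{E(S)}$, so bounding density via the minimum marginal alone cannot give a factor $2$. The paper sidesteps both issues by working with aggregate quantities $f(D_\ell)$ and $f(\dellhstar)$, a supermodular exchange inequality applied to $D_\ell$ and $\hstar$, and a stopping threshold tied to $\dstar/2$ rather than to $\hstar$ directly; that threshold choice is exactly what pushes the guarantee from $3$ to $2$.
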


Alternatively one could have considered the same problem under {\em matroid constraints} (instead of co-matroid constraints). 
We note that this problem is significantly harder,
since the Densest Subgraph problem can be reduced to special cases of this problem (see \cite{ac09,ks09}).
The Densest Subgraph problem is notoriously hard: 
the best factor approximation known to date is 
 $O(n^{1/4 + \eps})$ for any $\eps > 0$~\cite{feige10}.

Special cases of the {\denm} problem have been extensively studied in the context of graph density, 
and we discuss this next. 

\subsection{Comparison to Graph Density}

Given an undirected graph $G = (V, E)$, the density  $d(S)$ of a subgraph on vertex set $S$ is defined
as the quantity $\frac{|E(S)|}{|S|}$, where $E(S)$ is the set of edges in the subgraph induced by the 
vertex set $S$. The densest subgraph problem is to find the subgraph $S$ of $G$ that maximizes the 
density. 

The concept of graph density is ubiquitous, more so in the context of social networks. 
In the context of social networks, the problem is to detect {\em communities}: collections of individuals who are relatively well connected as compared to other parts of the social network graph. 

The results relating to graph density have been fruitfully applied to finding communities in the social network graph (or even web graphs, gene annotation graphs~\cite{ks10}, problems related to the formation of most effective teams~\cite{gajewar-sarma}, etc.).
Also, note 
that graph density appears naturally in the study of threshold phenomena in random graphs, see \cite{alon-spencer}. 

Motivated by applications in social networks, the graph density problem and its variants have been well studied.
Goldberg~\cite{goldberg84}  proved that the densest subgraph problem  can be solved optimally
in polynomial time: he showed this via a reduction to a series of max-flow computations.
Later, others~\cite{charikar00, ks09} have given new proofs for the above result, motivated by considerations to extend the result to some generalizations and variants.

Andersen and Chellapilla~\cite{ac09} studied the following generalization of the above problem.
Here, the input also includes an integer $k$, and the goal is to find the densest subgraph $S$ 
subject to the constraint $\crd{S} \geqslant k$. This corresponds to finding {\em sufficiently large}
dense subgraphs in social networks. This problem is $\np$-hard~\cite{ks09}. Andersen and Chellapilla~\cite{ac09}
gave a $2$-approximation algorithm. Khuller and Saha~\cite{ks09} give two alternative  
algorithms: one of them is a greedy procedure, while the other is $\lp$-based. Both the algorithms have
$2$-factor guarantees.

Gajewar and Sarma~\cite{gajewar-sarma} consider a further generalization. The input also includes a partition 
of the vertex set into $U_1, U_2, \cdots, U_t$, and non-negative integers $r_1, r_2, \cdots, r_t$. The goal is 
to find the densest subgraph $S$ subject to the constraint that for all $1\leqs i \leqs t$, $\crd{S\cap U_i} \geqslant r_i$. 
They gave a $3$-approximation algorithm by extending the greedy procedure of Khuller and Saha~\cite{ks09}.

We make the following observations: 
(i) The objective function $\crd{E(S)}$ is monotone and supermodular.
(ii) The constraint $\crd{S} \geqslant k$ (considered by \cite{ac09}) is a co-matroid constraint; this corresponds to 
the {\em cardinality matroid}.
(iii) The constraint considered by Gajewar and Sarma~\cite{gajewar-sarma} is also a co-matroid constraint; this
corresponds to the {\em partition matroid}
(formal definitions are provided in Section~\ref{defs}).
Consequently, our main result Theorem~\ref{main.thm} improves upon the above results in three directions:
\begin{itemize}
\item {\em Objective function}: Our results apply to general monotone supermodular functions $f$ instead of the specific 
set function $\crd{E(S)}$ in graphs.
\item {\em Constraints}: We allow co-matroid constraints corresponding to {\em arbitrary} matroids. 
\item {\em Approximation Factor:} For the problem considered by Gajewar and Sarma~\cite{gajewar-sarma}, we improve the approximation guarantee from $3$ to $2$.
We match the best factor known for the 
at-least-$k$ densest subgraph problem considered in \cite{ac09, ks09}.
\end{itemize}

\subsection{Other Results}

\noindent
{\it Knapsack Covering Constraints:}

We also consider the following variant of the {\denm} problem.
In this variant, we will have a weight $w_i$ (for $i = 1,\cdots, \crd{U}$) for every element $i \in U$, and a number $k \in \natnum$.
A set $S$ of elements is {\em feasible} if and only if the following condition holds:
\[
\sum_{i\in S} w_i \geqslant k
\]
We call this a {\em knapsack covering constraint}.
We extend the proof of Theorem~\ref{main.thm} to show the following:
\begin{theorem}
\label{knapsack}
Suppose we are given a monotone supermodular function $f$ over a universe $U$, weights $w_i$ for every element $i\in U$, and a number $k\in \natnum$.  
Then there is a $3$-approximation algorithm for maximizing the {\em density function} 
$d(S)$ subject to {\em knapsack covering constraints} corresponding to the weights $w_i$ and the number $k$.
\end{theorem}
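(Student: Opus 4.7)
The plan is to reduce the knapsack-covering problem to an instance of {\denm} with a cardinality co-matroid constraint, via a guess-and-restrict procedure, and then to correct for feasibility of the original knapsack constraint. Let $S^*$ denote an optimal solution and let $e^*$ be the element of $S^*$ of maximum weight. Since $e^*$ is not known in advance, we enumerate all $|U|$ possible choices of $e^*$, run the procedure below for each guess, and output the best feasible candidate produced.

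For a fixed guess of $e^*$, first discard every element $i$ with $w_i > w(e^*)$, leaving a restricted universe $U'$; under the correct guess, $S^* \subseteq U'$. Since every element of $U'$ has weight at most $w(e^*)$ while $\sum_{i \in S^*} w_i \geqs k$, we have $|S^*| \geqs k' := \lceil k / w(e^*) \rceil$. We may therefore replace the knapsack covering constraint by the cardinality co-matroid constraint $|S| \geqs k'$ (corresponding to a uniform matroid on $U'$) and invoke Theorem~\ref{main.thm} on this instance. This yields a set $\tilde{S} \subseteq U'$ satisfying $|\tilde{S}| \geqs k'$ and $d(\tilde{S}) \geqs d(S^*)/2$.

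The final step is a feasibility correction. If $w(\tilde{S}) \geqs k$, output $\tilde{S}$; otherwise augment $\tilde{S}$ by inserting the heaviest elements of $U'$ not already present, one at a time, until the knapsack constraint is met. By monotonicity of $f$, the value $f(\tilde{S})$ only grows under this augmentation, so the sole source of loss in density is the growth of $|\tilde{S}|$. The main obstacle will be to bound this multiplicative growth by a factor of $3/2$; combined with the factor of $2$ from Theorem~\ref{main.thm}, this delivers the claimed $3$-approximation. I expect the bound to follow from a case split on $w(\tilde{S})$: when $w(\tilde{S}) \geqs k/2$, few augmentations suffice since each added element has weight close to $w(e^*)$; when $w(\tilde{S}) < k/2$, one argues that sufficiently many heavy elements remain available in $U' \setminus \tilde{S}$, using that $S^*$ must itself contain elements of total weight $\geqs k$ while all of them have weight at most $w(e^*)$, so that a structural comparison between $\tilde{S}$ and $S^*$ keeps the number of additions below $|\tilde{S}|/2$.
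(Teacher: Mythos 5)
Your reduction to a cardinality co-matroid constraint is clean up to the augmentation step, but the $3/2$ size-growth bound you hope for there does not hold, and this is a genuine gap. Using Theorem~\ref{main.thm} as a black box gives you only the multiplicative guarantee $f(\tilde{S}) \geqs \frac{\dstar}{2}\crd{\tilde{S}}$ (writing $\dstar = d(\hstar)$ for the optimal density, so $\hstar$ is your $S^*$). After padding by $r$ elements you obtain
\[
d(\tilde{S}') \;\geqs\; \frac{f(\tilde{S})}{\crd{\tilde{S}} + r} \;\geqs\; \frac{\dstar}{2}\cdot\frac{\crd{\tilde{S}}}{\crd{\tilde{S}} + r},
\]
so a $3$-approximation forces $r \leqs \crd{\tilde{S}}/2$. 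The correct bound on $r$ is $r \leqs \crd{\hstar\setminus\tilde{S}} \leqs \crd{\hstar}$ (greedy-by-weight reaches a weight threshold with the minimum number of items, and $\hstar\setminus\tilde{S}$ is one valid completion), and nothing forces $\crd{\hstar} \leqs \crd{\tilde{S}}/2$. Concretely, one can have $\crd{\tilde{S}} = k'$ with $w(\tilde{S})$ negligible while the $r \approx k'$ heaviest elements of $U'\setminus\tilde{S}$ are needed to reach weight $k$; then padding doubles $\crd{\tilde{S}}$ and your analysis delivers at best a $4$-approximation. Your case split does not rescue this: in the branch $w(\tilde{S}) \geqs k/2$, the greedily added elements are only the heaviest in $U'\setminus\tilde{S}$ and can be far below $w(e^*)$; in the branch $w(\tilde{S}) < k/2$, the ``enough heavy elements remain'' argument just re-derives $r \leqs \crd{\hstar}$, which is not $\crd{\tilde{S}}/2$.

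The paper avoids this precisely by not treating Theorem~\ref{main.thm} as a black box. It re-runs the same greedy but lowers the stopping threshold used to define $D_\ell$ from $\dstar/2$ to $\dstar/3$, and the analogues of Claims~\ref{GGG} and~\ref{HHH} (Claim~\ref{KKK} and the modified Claim~\ref{HHH}) combine to give the \emph{additive} lower bound $f(D_\ell) \geqs \frac{\dstar}{3}\crd{D_\ell} + \frac{2\dstar}{3}\crd{\hstar}$. That extra $\frac{2\dstar}{3}\crd{\hstar}$ of slack is exactly what absorbs the up to $2\crd{\hstar}$ elements added in the feasibility correction (done greedily by weight), giving $\frac{f(D_\ell)}{\crd{D_\ell}+2\crd{\hstar}} \geqs \dstar/3$. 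This additive slack in $\crd{\hstar}$ is information that the black-box statement $d(\tilde{S}) \geqs \dstar/2$ has already discarded, and it is the crux of the $3$-factor. To make your route work you would need Theorem~\ref{main.thm} to export an inequality of the form $f(\tilde{S}) \geqs \alpha\crd{\tilde{S}} + \beta\crd{\hstar}$ --- at which point you are no longer using it as a black box and are effectively reconstructing the paper's argument.
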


\noindent
{\it Dependency Constraints:}

Saha et.~al\cite{ks10} consider a variant of the graph density problem.
In this version, we are given a specific collection of vertices $A \subseteq V$; a subset $S$ of vertices is {\em feasible} iff $A \subseteq S$. We call this restriction the {\em subset} constraint. The objective is to find the densest subgraph among subsets satisfying a subset constraint. Saha et.~al\cite{ks10} prove that this problem is solvable in polynomial time by reducing this problem to a series of max-flow computations.

We  study a generalization of the subset constraint problem. Here, we are given a monotone supermodular function
$f$ defined over universe $U$. 
Additionally, we are given a 
{\em directed graph} $D = (U, \vec{A})$ over the universe $U$. A feasible solution $S$ has to satisfy the following property: if $a \in S$, then every vertex of the digraph $D$ {\em reachable} from $a$ also has to belong to $S$. Alternatively, $a \in S$ and $(a, b) \in \vec{A}$ implies that $b \in S$.  We call the digraph $D$ as the {\em dependency graph} and such constraints as {\em dependency} constraints.
The goal is to find the densest subset $S$ subject to the dependency constraints. We call this the $\depden$ problem.
We note that the concept of dependency constraints generalizes that of the subset constraints: construct a digraph $D$ by drawing directed arcs from every vertex in $U$ to every vertex in $A$.  
The motivation for this problem comes from certain considerations in 
social networks, where we are to find the densest subgraph but with the restriction that in the solution subgraph all the members of a sub-community (say, a family) are present or absent simultaneously. 
In literature, such a solution $S$ that satisfies the dependency constraints is also called a {\em closure} (see~\cite{topkis-book}, Section 3.7.2). Thus our problem can be rephrased as that of finding the densest subset over all closures. 

We note that dependency constraints are incomparable with co-matroid constraints. In fact dependency constraints are not even upward monotone: it is {\em not} true that 
if $S$ is a feasible subset, {\em any} superset of $S$ is feasible.

Our result is as follows:
\begin{theorem}
\label{depden}
The $\depden$ problem is solvable in polynomial time.
\end{theorem}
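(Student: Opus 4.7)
The plan is to reduce the density optimization to a parametric feasibility problem. For each real $\lambda$, define $g_\lambda(S) \triangleq f(S) - \lambda|S|$, so $\dstar$ is the largest $\lambda$ for which some non-empty closure $S$ of $D$ has $g_\lambda(S) \geqs 0$. Assuming the standard oracle model where $f$ takes rational values of polynomial bit complexity, $\dstar$ is rational with denominator at most $|U|$; hence $\dstar$ can be pinpointed by a polynomial number of feasibility queries, either by binary search or by a Dinkelbach-style iteration that updates $\lambda \leftarrow f(S)/|S|$ upon receiving a witness closure $S$.

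Each feasibility query reduces to submodular function minimization over the lattice of closures. Since $f$ is supermodular and $\lambda|\cdot|$ is modular, $g_\lambda$ is supermodular and $h_\lambda \triangleq -g_\lambda$ is submodular. The family $\mathcal{C}$ of closures of $D$ is a distributive sublattice of $2^U$: closure under union and intersection is immediate from the defining condition that $a \in S$ and $(a,b) \in \vec{A}$ imply $b \in S$. So the query becomes: is $\min_{S \in \mathcal{C} \setminus \{\emptyset\}} h_\lambda(S) \leqs 0$?

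I would dispatch this constrained SFM by reducing to an unconstrained SFM on $U$ via a large-penalty encoding. Choose $M$ strictly larger than the range of $h_\lambda$ (e.g., $M = 2f(U) + 2\lambda|U| + 1$, computable in one oracle call), and define
\[
\tilde h_\lambda(S) \;\triangleq\; h_\lambda(S) \;+\; M \cdot |\{(u,v) \in \vec{A} : u \in S,\ v \notin S\}|.
\]
The penalty is a sum of indicators $\mathbf{1}[u \in S, v \notin S]$, each of which is submodular by routine case analysis, so $\tilde h_\lambda$ is submodular. The choice of $M$ forces every minimizer of $\tilde h_\lambda$ over $2^U$ to be a closure, and the resulting minimum equals $\min_{S \in \mathcal{C}} h_\lambda(S)$; any polynomial-time SFM algorithm then completes the check. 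Non-emptiness is enforced by iterating over a root vertex $v \in U$, adding an additional large penalty for $v \notin S$, and returning the best witness across the $|U|$ choices of $v$.

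The main obstacle I anticipate is the interaction between the outer parametric loop and the inner SFM at the boundary $\lambda = \dstar$: there the optimal closure only satisfies $g_\lambda(S) = 0$ with equality, so the SFM call must return such a witness rather than the trivial empty set. The root-vertex enumeration above addresses this, and the Dinkelbach variant additionally keeps $\lambda$ rational of controlled bit complexity throughout, sidestepping real-arithmetic precision issues and guaranteeing polynomial overall running time.
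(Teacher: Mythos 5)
Your argument is correct, but it is a genuinely different route from the paper's. You reduce density maximization to a parametric sequence of feasibility queries on $g_\lambda(S) = f(S) - \lambda|S|$, and dispatch each query by submodular function minimization over the distributive lattice of closures, encoded as unconstrained SFM via a big-$M$ penalty on arcs leaving $S$, with root-vertex enumeration to enforce non-emptiness. That is a sound combinatorial scheme: the directed-cut penalty is submodular, the penalty scale correctly forces minimizers into the lattice, and your treatment of the boundary case $\lambda = \dstar$ (where the empty closure ties) is exactly the right thing to worry about. The paper instead proceeds via a convex program built on the Lov\'{a}sz extension of $f$ (for the graph case, Charikar's LP augmented with $x_i \leqslant x_j$ for each arc), then argues that a basic feasible solution, after relabelling so that the coordinates are sorted, has all non-zero components equal to a common value $1/\ell$; a second proof derives the same support structure from total unimodularity and Cramer's rule. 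The paper's LP/CP route has the side benefit of yielding the ``small support'' structural results it advertises (e.g.\ every non-zero coordinate of a BFS taking one of two values for the at-least-$k$ densest-subgraph LP), and it reuses the same machinery for Corollary~\ref{BBB}; your route is arguably more standard for fractional optimization, avoids the Lov\'{a}sz-extension/ellipsoid overhead conceptually, and makes the role of the closure lattice explicit, but it buys none of those structural byproducts. One small caveat on your claim that $\dstar$ has denominator at most $|U|$: that holds if $f$ is integer-valued; in the general rational oracle model the denominator is bounded by $|U|$ times the common denominator of the oracle values, which is still polynomial bit length, so the binary search / Dinkelbach termination argument survives but the stated bound should be adjusted.
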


The salient features of the above result are as follows:
\begin{itemize}
\item While the result in \cite{ks10} is specific to graph density, our result holds for density functions arising from 
arbitrary monotone supermodular functions. 
\item Our proof of this result is $\lp$-based. The work of \cite{ks10} is based on max-flow computations. 
We can 
extend our $\lp$-based approach (via convex programs) to the case for density functions arising from arbitrary 
monotone supermodular $f$, while we are not aware as to how to extend the max-flow based computation.
\item The proof technique, inspired by Iwata and Nagano~\cite{iwata-nagano09} also extends to show ``small support''
results: thus, for instance, we can show that for the $\lp$ considered by \cite{ks09} for the at-least-k-densest subgraph problem, 
every {\em non-zero} component of any basic feasible solution is one of {\em two} values.
\end{itemize}

\noindent
{\it Combination of Constraints:}

We also explore the problem of finding the densest subset subject to a combination of the constraints considered. We are able to prove results for the problem of maximizing a density function 
subject to (a) {\em co-matroid} constraints and (b) {\em subset} constraints. 
Suppose we are given a monotone supermodular function $f$ over a universe $U$, a matroid $\calM = (U, \calI)$, and a subset of elements $A\subseteq U$.
A subset $S$ is called feasible iff (1) $S$ satisfies the co-matroid constraints wrt $\calM$ (i.e. $\overline{S} \in \calI$) and (2) $S$ satisfies the subset constraint
wrt $A$ (i.e. $A \subseteq S$). We show the following:
\begin{theorem}
\label{combo}
There is a $2$-approximation algorithm for the problem of maximizing the {\em density function} $d(S)$ corresponding 
to a monotone supermodular function $f$, subject to the co-matroid and subset constraints.
\end{theorem}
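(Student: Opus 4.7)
The plan is to reduce the combined-constraint problem directly to a pure co-matroid density instance and then invoke Theorem~\ref{main.thm} as a black box. The key observation is that forcing $A \subseteq S$ is equivalent to forbidding any element of $A$ from appearing in the complement $\overline{S}$, which in matroid language is the same as demanding that $A$ be disjoint from the independent set $\overline{S}$. Hence the subset constraint can be absorbed into the matroid itself by turning the elements of $A$ into loops.

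Concretely, I would define
\[
\calI' = \{ I \in \calI : I \cap A = \emptyset \}
\]
and claim that $\calM' = (U, \calI')$ is a matroid on the same ground set $U$. Downward closure of $\calI'$ is immediate from that of $\calI$. For the exchange axiom, given $I_1, I_2 \in \calI'$ with $|I_1| < |I_2|$, the exchange property of $\calM$ produces some $e \in I_2 \setminus I_1$ with $I_1 \cup \{e\} \in \calI$; since $e \in I_2 \subseteq U \setminus A$, the augmented set is still disjoint from $A$ and therefore lies in $\calI'$. Equivalently, $\calM'$ is just $\calM$ with every element of $A$ declared a loop, a standard matroid operation.

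By construction, $\overline{S} \in \calI'$ holds iff $\overline{S} \in \calI$ and $\overline{S} \cap A = \emptyset$, which is precisely the conjunction of the co-matroid constraint with respect to $\calM$ and the subset constraint $A \subseteq S$. Thus the {\denm} instance $(f, \calM')$ has exactly the same family of feasible sets, and the same density $f(S)/|S|$ on each, as the original combined-constraint instance. Running the algorithm from Theorem~\ref{main.thm} on $(f, \calM')$ therefore returns a feasible $S$ whose density is within a factor $2$ of the combined-constraint optimum, which is exactly the content of Theorem~\ref{combo}.

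The only point requiring care is to confirm that the algorithm underlying Theorem~\ref{main.thm} uses $\calM$ purely through independence (or rank) oracle queries, so that replacing it by $\calM'$ constitutes a legitimate reduction; this is routine since an oracle for $\calM'$ just answers ``no'' on any set intersecting $A$ and defers to the $\calM$ oracle otherwise. No new approximation loss is introduced, and no property of $f$ beyond monotonicity and supermodularity is used, so the approximation factor of $2$ carries over verbatim.
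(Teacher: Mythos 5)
Your reduction is correct, and it is a genuinely different route from the one in the paper. The paper handles the subset constraint by modifying the greedy algorithm itself: it replaces the first step so that $H_1$ is the densest set \emph{among those containing $A$}, observes that every subsequent $D_i$ and $D'_i$ then contains $A$, and asserts that the analysis of Section~\ref{proofmain} carries through unchanged. You instead absorb the subset constraint into the matroid: you restrict $\calM$ to $\calM' = (U, \calI')$ with $\calI' = \{I \in \calI : I \cap A = \emptyset\}$ (turning $A$ into loops), verify the matroid axioms, note that $\overline{S} \in \calI'$ is exactly the conjunction ``$\overline{S} \in \calI$ and $A \subseteq S$,'' and then invoke Theorem~\ref{main.thm} as a black box on $(f,\calM')$.

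Your approach is arguably cleaner: it needs no re-examination of Claims~\ref{DDD}--\ref{HHH} (the paper's ``the analysis is the same'' implicitly asks the reader to re-check, e.g., that $A_1 = H_1 \cap \hstar$ still satisfies $d(H_1) \geqslant d(A_1)$ under the restricted $\argmax$, which does hold because $A \subseteq A_1$), and it shows that the subset constraint is not an additional constraint type at all but a special case of co-matroid constraints. The one point you rightly flag -- that the algorithm of Theorem~\ref{main.thm} accesses $\calM$ only via independence/rank queries, including inside Lemma~\ref{extension} -- is indeed satisfied: the rank of $\calM'$ is $r'(S) = r(S \setminus A)$, so the extension LP in Lemma~\ref{extension} is still solvable greedily. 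No gap; the factor $2$ carries over verbatim as you claim.
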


\subsection{Related Work}

Recently, there has been a considerable interest in the problems of optimizing submodular functions under various types of constraints. The most common constraints that are considered are {\em matroid constraints}, {\em knapsack constraints} or combinations of the two varieties. Thus for instance, Calinescu et. al~\cite{calinescu11} considered the problem of maximizing a {\em monotone} submodular function subject to a matroid constraint. They provide an algorithm and show that it yields a $(1 - 1/e)$-approximation: this result is essentially optimal (also see the recent paper \cite{fw12} for a combinatorial algorithm for the same). Goemans and Soto~\cite{goemans-soto} consider the problem of minimizing a {\em symmetric} submodular function subject to arbitrary {\em matroid} constraints. They prove the surprising result that this problem can be solved in polynomial time. In fact, their result extends to the significantly more general case of {\em hereditary constraints}; the problem of extending our results to {\em arbitrary} hereditary functions is left open.

The density functions that we consider may be considered as ``close'' to the notion of supermodular functions. To the best of our knowledge, the general question of {\em maximizing} density functions subject to a (co-)matroid constraint has never been considered before. 

\subsection{Proof Techniques}

We employ a greedy algorithm to prove Theorems~\ref{main.thm} and ~\ref{knapsack}. Khuller and Saha~\cite{ks09}
and Gajewar and Sarma~\cite{gajewar-sarma} had considered a natural greedy algorithm 
for the problem of maximizing graph density subject to co-matroid constraints corresponding to the 
cardinality matroid and partition matroid respectively. 
Our greedy algorithm can be viewed as an abstraction of the natural greedy algorithm to the 
generalized scenario of arbitrary monotone supermodular functions.
However, our analysis is different from that in \cite{ks09, gajewar-sarma}: the efficacy of 
our analysis is reflected in the fact that we improve on the guarantees provided by \cite{gajewar-sarma}.
While they provide
a $3$-approximation algorithm for the graph density problem with partition matroid 
constraints, we use the modified analysis to obtain a $2$-factor guarantee.
In both of the earlier papers \cite{ks09, gajewar-sarma}, a particular stopping condition is employed to 
define a set $D_\ell$ useful in the analysis. For instance, in Section 4.1 of \cite{gajewar-sarma} they 
define $D_\ell$ using the optimal set $\hstar$ directly. We choose a different {\em stopping condition}
to define the set $D_\ell$; it turns out that this choice is crucial for achieving a $2$-factor guarantee.

We prove Theorem~\ref{depden} using $\lp$-based techniques. 
In fact, we  provide two proofs for the same.  Both our techniques also provide alternate new proofs 
of the basic result that graph density is computable in polynomial time. The first proof method is inspired by
Iwata and Nagano~\cite{iwata-nagano09}. The second proof method invokes Cramer's rule to 
derive the conclusion.

\subsection{Organization}
We present the relevant definitions in Section~\ref{defs}. We proceed to give the proof of Theorem~\ref{main.thm} in 
Section~\ref{proofmain}, while the proof of Theorem~\ref{knapsack} is presented in 
Section~\ref{app:knapsack}. 
The proof of Theorem~\ref{depden} is presented in Section~\ref{sec:depden}.
and the proof of Theorem~\ref{combo} is in Section~\ref{app:combo}.
\section{Preliminaries}
\label{defs}
In this paper, we will use the following notation: given {\em disjoint} sets $A$ and $B$ we will use 
$A+B$ to serve as shorthand for $A\cup B$. Vice versa, when we write $A +B$ it will hold implicitly that the sets $A$ and $B$ are disjoint.

\noindent
{\bf Monotone:}
A set function $f$ is called {\em monotone} if $f(S) \leqslant f(T)$ whenever $S \subseteq T$. 

\noindent
{\bf Supermodular:}
A set function $f: 2^{U} \rightarrow \rplus$ over a universe $U$ is called {\em supermodular} if the following holds for any two sets 
$A, B \subseteq U$:
\[
f(A) + f(B) \leqslant f(A\cup B) + f(A \cap B)
\]

If the inequality holds (for every $A, B$) with the sign reversed, then the function $f$ is called {\em submodular}. 
In this paper, we will use the following equivalent definition of supermodularity:
given disjoint sets $A, B$ and $C$, 
\[
f(A + C) - f(A) \leqslant f(A  + B + C) - f(A + B)
\]
We can think of this as follows: the {\em marginal utility} of the set of elements $C$ to the set $A$ increases as the set becomes ``larger" ($A+B$ instead of $A$).

It is well known (see \cite{gls81, schrijver00}) that supermodular functions can be {\em maximized} in polynomial time (whereas submodular functions can be minimized in polynomial time). Let us record this as:
\begin{theorem}
\label{CCC}
Any supermodular function $f: 2^{U} \rightarrow \rplus$ can be maximized in polynomial time.
\end{theorem}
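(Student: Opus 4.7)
The plan is a one-line reduction to submodular function minimization. Given a supermodular $f : 2^U \to \rplus$, define $g(S) := -f(S)$. The supermodularity inequality $f(A) + f(B) \leqs f(A \cup B) + f(A \cap B)$ immediately yields $g(A) + g(B) \geqs g(A \cup B) + g(A \cap B)$, so $g$ is a (real-valued, nonpositive) submodular function. Since $\argmax_S f(S) = \argmin_S g(S)$, any polynomial-time algorithm that minimizes $g$ simultaneously maximizes $f$.

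The second step is to invoke the classical polynomial-time algorithms for submodular function minimization. The first such algorithm, due to Gr\"otschel, Lov\'asz and Schrijver \cite{gls81}, uses the ellipsoid method applied to the Lov\'asz extension of $g$; a purely combinatorial algorithm was later given by Schrijver \cite{schrijver00}. Both algorithms require only value-oracle access to $g$ (which, by the definition above, is just oracle access to $f$), and both run in time polynomial in $\crd{U}$ and in the bit-complexity of the function values.

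The only potential subtlety is that our $g$ is nonpositive rather than nonnegative, but the standard SFM algorithms do not assume any sign restriction on the function: they work for arbitrary integer-valued (or rational-valued, given appropriate precision) submodular functions. Consequently there is no real obstacle; the theorem follows immediately from the sign-flipping reduction combined with the cited SFM results. In a formal write-up, the ``proof'' is therefore essentially a single paragraph noting the reduction and appealing to \cite{gls81,schrijver00}.
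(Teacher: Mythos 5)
Your reduction is correct and is exactly what the paper has in mind: the paper records Theorem~\ref{CCC} as a well-known consequence of polynomial-time submodular function minimization, citing the same sources \cite{gls81,schrijver00}, and your write-up simply spells out the sign-flip $g = -f$ that makes the reduction explicit. No gap.
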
 

We also state the following folklore corollary:
\begin{corollary}
\label{BBB}
Given any supermodular function $f: 2^{U} \rightarrow \rplus$, we can find 
$\max_{S} \frac{f(S)}{\crd{S}}$ in polynomial time.
\end{corollary}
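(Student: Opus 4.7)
The plan is to reduce the ratio-maximization problem to a one-parameter family of plain supermodular maximizations, and then locate the optimal ratio by binary search. For each real parameter $\lambda \geqs 0$, define $g_\lambda : 2^U \to \mathbb{R}$ by $g_\lambda(S) = f(S) - \lambda \crd{S}$. Since $\crd{S}$ is modular, subtracting $\lambda \crd{S}$ from the supermodular $f$ preserves supermodularity, so each $g_\lambda$ is supermodular. The key equivalence is then: for any nonempty $S \subseteq U$, $f(S)/\crd{S} \geqs \lambda$ if and only if $g_\lambda(S) \geqs 0$. Consequently, writing $\lambda^{*} = \max_{\emptyset \neq S} f(S)/\crd{S}$, we have $\lambda^{*} \geqs \lambda$ iff $\max_{\emptyset \neq S} g_\lambda(S) \geqs 0$. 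Since $g_\lambda(\emptyset) = 0$, the sign of the unrestricted maximum of $g_\lambda$ decides the inequality (the unrestricted maximum is strictly positive iff some nonempty set achieves a positive value), and by Theorem~\ref{CCC} we can evaluate this maximum in polynomial time for any given $\lambda$.

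Armed with this decision oracle, the algorithm performs binary search on $\lambda$, each iteration invoking the supermodular maximization from Theorem~\ref{CCC}. Whenever $\max_S g_\lambda(S) > 0$ we move $\lambda$ upward; when the maximum is $0$ and is attained only by $\emptyset$, we move it downward; if it is $0$ and attained by some nonempty set $S$, we have found a set with $f(S)/\crd{S} = \lambda$ and refine accordingly. The final output is the maximizer returned at $\lambda = \lambda^{*}$, which is precisely a set of maximum density.

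The main obstacle is turning the binary search into an exact polynomial-time procedure rather than an approximate one. For this one uses the standard separation argument: the optimal value $\lambda^{*}$ is of the form $f(S^{*})/\crd{S^{*}}$ with $\crd{S^{*}} \leqs n$ and $f(S^{*})$ a rational of polynomially bounded encoding. Hence any two distinct candidate densities $f(S_1)/\crd{S_1}$ and $f(S_2)/\crd{S_2}$ differ by at least a $1/\mathrm{poly}(n, \text{bit-size of } f)$ amount, so after polynomially many binary-search steps the interval becomes short enough to identify $\lambda^{*}$ uniquely; a final call to the supermodular oracle at this $\lambda^{*}$ recovers an optimal $S^{*}$. (An alternative that avoids the separation estimate is Dinkelbach's iteration $\lambda_{i+1} = f(S_{i+1})/\crd{S_{i+1}}$ where $S_{i+1}$ maximizes $g_{\lambda_i}$; the sequence $\lambda_i$ is strictly increasing and a standard argument bounds the number of iterations polynomially.) Either route reduces the ratio problem to polynomially many applications of Theorem~\ref{CCC}, completing the proof.
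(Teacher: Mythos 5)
Your proof is correct and coincides with the paper's own second verification of this corollary: the paper likewise sets $g(\alpha, S) = f(S) - \alpha\crd{S}$, notes each $g(\alpha, \cdot)$ is supermodular (hence maximizable via Theorem~\ref{CCC}), and locates the optimum by binary search on $\alpha$. The paper also sketches an alternative proof via a convex program over the Lov\'asz extension; your extra discussion of making the binary search exact (separation bound on candidate densities, or Dinkelbach iteration) usefully fills in a detail the paper leaves implicit.
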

For completeness, a proof of this Corollary is included in  Section~\ref{cor.proof}.

\noindent
{\bf Density Function:}
Given a function $f$ over $U$,
the density of a set $S$ is defined to be $d(S) = \frac{f(S)}{\crd{S}}$.

\noindent
{\bf Matroid:}
A matroid is a pair $\calM = (U, \calI)$ where $\calI \subseteq 2^{U}$, and 
\begin{enumerate}
\item (Hereditary Property) $\forall{B\in \calI}, A \subset B \implies A \in \calI$.
\item (Extension Property) $\forall{A, B \in \calI}: \crd{A} < \crd{B} \implies \exists{x \in B\setminus A}: A + x \in \calI$
\end{enumerate}
Matroids are generalizations of vector spaces in linear algebra and are ubiquitous in 
combinatorial optimization because of their connection with greedy algorithms. Typically the sets
in $\calI$ are called {\em independent} sets, this being an abstraction of linear independence in 
linear algebra. The maximal independent sets in a matroid are called the {\em bases} (again preserving the
terminology from linear algebra). An important fact for matroids is that all bases have equal cardinality -- this is an outcome of the Extension Property of matroids. 

Any matroid is equipped with a {\em rank function} $r : 2^{U} \rightarrow \rplus$. The rank of a subset $S$ is 
defined to be the size of the {\em largest} independent set contained in the subset $S$. By the Extension Property, this is well-defined. See the excellent text by Schrijver~\cite{schrijver-book} for details.

Two commonly encountered matroids are the 
(i) {\em Cardinality Matroid:} 
Given a universe $U$ and $r\in \natnum$, the {\em cardinality matroid} is the matroid $\calM = (U, \calI)$, where a set $A$ is {\em independent} (i.e. belongs to $\calI$) iff $\crd{A} \leqslant r$. 
(ii) {\em Partition Matroid:}
Given a universe $U$ and a partition of $U$ as $U_1, \cdots, U_r$ and non-negative integers $r_1, \cdots, r_t$, the {\em partition matroid} is $\calM = (U, \calI)$, where a set $A$ belongs to $\calI$ iff $\crd{A \cap U_i} \leqslant r_i$ for all $i = 1, 2, \cdots, t$.

\noindent
{\bf Convex Programs:} We will need the definition of a convex program, and that they can be solved to arbitrary precision in polynomial time, via the ellipsoid method(see \cite{gls81}).
We refer the reader to the excellent text~\cite{bv-book}.

\section{Proof of Theorem~\ref{main.thm}}
\label{proofmain}
We first present the algorithm and then its analysis. To get started, we 
describe the intuition behind the algorithm. 

Note that co-matroid constraints are {\em upward monotone}: if a set $S$ is feasible for such
constraints, then any {\em superset} of $S$ is also feasible. Thus, it makes sense to 
find a {\em maximal} subset of $U$ with the maximum density. In the following description of the algorithm, one may note that the sets $D_1, D_2, \cdots, D_i$ are an attempt to find the maximal subset with the 
largest density.  Given this rough outline, the algorithm is presented in Figure~\ref{algo}.

\begin{figure}
\begin{center}
\begin{boxedminipage}{0.8\hsize}
\begin{algorithmic}
\State $i \gets 1$
\State $H_i \gets \argmax_{X} \frac{f(X)}{\crd{X}}$
\State $D_i \gets H_i$
\While{$D_i ~\mathrm{ infeasible}$}
\State $H_{i+1} \gets \argmax_{X: X\cap D_i = \emptyset} \frac{f(D_i + X) - f(D_i)}{\crd{X}}$
\State $D_{i+1} \gets D_i + H_{i+1}$
\State $i \gets i+1$
\EndWhile
\State $L \gets i$
\For{$i = 1 \to L$} 
\State{Add arbitrary vertices to $D_i$ to make it minimal feasible}
\State{Call the result $D'_i$}
\EndFor
\State {Output the subset among the $D'_i$'s with the highest density}
\end{algorithmic}
\end{boxedminipage}
\end{center}
\caption{Main Algorithm}
\label{algo}
\end{figure}

We note that we can find the maximum $\max_{X: X\cap D_i = \emptyset} \frac{f(D_i + X) - f(D_i)}{\crd{X}}$
in polynomial time. This is because the function $f(D_i+X)$ for a fixed $D_i$ is supermodular (and we appeal to Corollary~\ref{BBB}).

Let $\hstar$ denote the optimal solution, i.e. the subset that maximizes the density $d(S)$ subject to the 
co-matroid constraints. Let $\dstar$ denote the optimal density, so that $f(\hstar) = \dstar{\cdot}\crd{\hstar}$.

We can make the following easy claim:
\begin{claim}
\label{DDD}
The subset $D_1$ obeys the inequality $d(D_1) \geqslant \dstar$.
\end{claim}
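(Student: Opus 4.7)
The plan is essentially a one-line observation, so the proof proposal is short. Recall that by construction $D_1 = H_1 = \argmax_{X} f(X)/|X|$, i.e.\ $D_1$ is the subset of $U$ with the highest density over \emph{all} subsets, with no feasibility restriction imposed. By Corollary~\ref{BBB} this maximum is well-defined and computable in polynomial time, so the algorithm's first line is legitimate.

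Now I would simply compare against $\hstar$. Since $\hstar \subseteq U$ is a particular subset of the universe (it happens to be feasible, but that is irrelevant for this step), it is one of the candidates considered by the $\argmax$ that defines $D_1$. Therefore
\[
d(D_1) \;=\; \max_{X \subseteq U} \frac{f(X)}{|X|} \;\geqslant\; \frac{f(\hstar)}{|\hstar|} \;=\; d(\hstar) \;=\; \dstar,
\]
which gives the claim.

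There is no real obstacle here: the claim just records that relaxing the co-matroid constraint can only increase the optimum density, and the first iterate $D_1$ achieves this relaxed optimum. The only subtlety worth flagging is that $D_1$ need not be feasible for the co-matroid constraint — indeed, that is precisely why the while-loop of the algorithm may have to extend it — but feasibility plays no role in establishing the density inequality itself. \QED
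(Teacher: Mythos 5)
Your proof is correct and is exactly the paper's argument: $D_1$ is the unconstrained densest subset, while $\dstar$ is the density of the particular subset $\hstar$, so $d(D_1) \geqslant d(\hstar) = \dstar$.
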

This is because $D_1$ is the densest subset in the universe $U$, while $\dstar$ is the density of a 
specific subset $\hstar$. 

In the following, we will have occasion to apply the following lemmas. 

\begin{lemma}
\label{lem.1}
Let $a, b, c, d, \theta\in \mathbb{R}^{+}$ be such that the inequalities $\frac{a}{b} \geqslant \theta$  and 
$\frac{c}{d} \geqslant \theta$ hold. Then it is true that $\frac{a + c}{b+ d} \geqslant \theta$.
Thus, if 
$\frac{a}{b} \geqslant \frac{c}{d}$, then 
$\frac{a+c}{b+d} \geqslant \frac{c}{d}$ (by setting $\theta = \frac{c}{d}$).
\end{lemma}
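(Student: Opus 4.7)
The plan is to reduce the two-variable fraction inequality to a single additive inequality, using the positivity of the denominators. First I would observe that since $b > 0$ and $d > 0$, the hypotheses $\frac{a}{b} \geqs \theta$ and $\frac{c}{d} \geqs \theta$ can be cleared of denominators to yield $a \geqs \theta b$ and $c \geqs \theta d$.

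Next, I would simply add these two inequalities to obtain $a + c \geqs \theta(b + d)$. Because $b + d > 0$, dividing both sides by $b + d$ recovers the desired bound $\frac{a+c}{b+d} \geqs \theta$.

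For the second assertion of the lemma, I would apply the first part with the specific choice $\theta = \frac{c}{d}$. The hypothesis $\frac{a}{b} \geqs \frac{c}{d}$ gives $\frac{a}{b} \geqs \theta$, and trivially $\frac{c}{d} \geqs \theta$, so the first part directly yields $\frac{a+c}{b+d} \geqs \frac{c}{d}$.

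There is no real obstacle here; the lemma is the standard mediant inequality, and the only subtlety is ensuring that the denominators are strictly positive so that the clearing and dividing steps are valid, which is guaranteed by the hypothesis that $a, b, c, d, \theta \in \rplus$.
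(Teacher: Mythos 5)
Your proof is correct, and it is the standard mediant-inequality argument: clear denominators (valid since $b, d > 0$), add, and divide by $b+d > 0$. The paper itself states Lemma~\ref{lem.1} without proof, treating it as an elementary fact, so your write-up simply supplies the expected routine justification.
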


Also, 
\begin{lemma}
\label{lem.2}
Let $a, b, c, d \in \mathbb{R}^{+}$ be real numbers such that 
$\frac{a}{b} \geqslant \frac{c}{d}$  holds.
\begin{itemize}
\item Suppose $a \geqslant c$, $b \geqslant d$. Then 
the inequality 
$\frac{a - c}{b - d} \geqslant \frac{a}{b}$ holds.
\item Suppose $c \geqslant a$, $d \geqslant b$. Then 
the inequality 
$\frac{c}{d} \geqslant \frac{c -a}{d-b}$ holds.
\end{itemize}
\end{lemma}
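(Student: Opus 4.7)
The plan is a direct cross-multiplication argument, since both parts of the lemma are essentially algebraic restatements of the hypothesis $ad \geqslant bc$ (which is what $a/b \geqslant c/d$ says once we clear denominators, using the fact that all quantities are positive). The intuition is the well-known mediant fact: if a ``small'' ratio $c/d$ is subtracted from a ``large'' combined ratio $a/b$ in a way that respects sign constraints, the leftover ratio can only go up.

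For the first bullet, I would start from $a/b \geqslant c/d$, cross-multiply using $b,d > 0$ to obtain $ad \geqslant bc$, and then rearrange this as $ab - bc \geqslant ab - ad$, that is, $b(a-c) \geqslant a(b-d)$. Since $a \geqslant c$ and $b \geqslant d$ (and implicitly $b > d$ so that $b - d > 0$; one should note this hypothesis is needed for the claimed ratio to be well-defined and have the correct sign), I can divide both sides by $b(b-d) > 0$ to get $(a-c)/(b-d) \geqslant a/b$, as desired.

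For the second bullet, the argument is symmetric: from $ad \geqslant bc$ I rewrite this as $cd - ad \leqslant cd - bc$, i.e., $d(c-a) \leqslant c(d-b)$, and then divide by $d(d-b) > 0$ (using $d > b$) to obtain $(c-a)/(d-b) \leqslant c/d$. So both parts follow in one line each from the single cross-multiplied inequality $ad \geqslant bc$.

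There is no real obstacle here; the only subtlety worth flagging in the writeup is that one must implicitly assume the strict inequalities $b > d$ (respectively $d > b$) so that the denominators $b-d$ (respectively $d-b$) are positive rather than zero, and then the direction of the inequality is preserved when dividing. Given this, the lemma is immediate from cross-multiplication and cancellation of the common term $ab$ (respectively $cd$).
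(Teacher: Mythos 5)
Your proof is correct, and it is the natural argument. The paper states Lemma~\ref{lem.2} without proof, treating it as an elementary fact about fractions, so there is no competing proof to compare against; your direct cross-multiplication (reduce to $ad \geqslant bc$, rewrite as $b(a-c) \geqslant a(b-d)$ resp.\ $d(c-a) \leqslant c(d-b)$, then divide) is exactly what one would write if one wanted to spell it out. You are also right to flag that the statement implicitly needs $b > d$ (resp.\ $d > b$) for the ratio $\frac{a-c}{b-d}$ (resp.\ $\frac{c-a}{d-b}$) to be well-defined and for the division step to preserve the inequality direction; in the paper's applications these strict inequalities do hold, since $\crd{H_{i+1}} > \crd{A_{i+1}}$ and $\crd{D_{i+2}} > \crd{D_{i+1}}$ at the points where the lemma is invoked.
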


We make the following claim:
\begin{claim}
\label{XXX}
The sequence of subsets $D_1, D_2, \cdots, D_L$ obeys the following ordering:
\[
\frac{f(D_1)}{\crd{D_1}} \geqslant \frac{f(D_2) - f(D_1)}{\crd{D_2} - \crd{D_1}} \geqslant \cdots \geqslant\frac{f(D_{i+1}) - f(D_i)}{\crd{D_{i+1}} - \crd{D_i}} \geqslant\cdots \geqslant\frac{f(D_L) - f(D_{L-1})}{\crd{D_L} - \crd{D_{L-1}}}
\]
\end{claim}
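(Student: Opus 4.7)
The plan is to prove the chain termwise, establishing for each $i \geq 1$ the single inequality
\[
\frac{f(D_i) - f(D_{i-1})}{\crd{D_i} - \crd{D_{i-1}}} \;\geqs\; \frac{f(D_{i+1}) - f(D_i)}{\crd{D_{i+1}} - \crd{D_i}},
\]
with the convention $D_0 = \emptyset$, $f(D_0) = 0$ so that the first term $f(D_1)/\crd{D_1}$ fits uniformly into the pattern. Since $D_j = D_{j-1} + H_j$ for every $j$, this is the same as showing
\[
\frac{f(D_{i-1}+H_i) - f(D_{i-1})}{\crd{H_i}} \;\geqs\; \frac{f(D_i + H_{i+1}) - f(D_i)}{\crd{H_{i+1}}}.
\]

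The key observation I would use is that the greedy step defining $H_i$ already does most of the work, and no appeal to supermodularity is actually required. At iteration $i$, the algorithm selected $H_i$ as the maximizer of $(f(D_{i-1}+X)-f(D_{i-1}))/\crd{X}$ over subsets $X$ disjoint from $D_{i-1}$. Since $H_{i+1}$ is disjoint from $D_i \supseteq D_{i-1}$, the set $Y := H_i + H_{i+1}$ is a legitimate alternative candidate at step $i$. Thus the greedy choice gives
\[
\frac{f(D_{i-1}+H_i) - f(D_{i-1})}{\crd{H_i}} \;\geqs\; \frac{f(D_{i-1}+H_i+H_{i+1}) - f(D_{i-1})}{\crd{H_i} + \crd{H_{i+1}}}.
\]

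Writing $a = f(D_i) - f(D_{i-1})$, $b = \crd{H_i}$, $c = f(D_{i+1}) - f(D_i)$, $d = \crd{H_{i+1}}$, the inequality above reads $a/b \geqs (a+c)/(b+d)$. Clearing denominators gives $ad \geqs bc$, i.e., $a/b \geqs c/d$, which is precisely the desired consecutive-term inequality. (This is exactly the first part of Lemma~\ref{lem.2} applied to the pair $(a+c, b+d)$ and $(a, b)$.) Chaining these inequalities for $i=1,2,\ldots,L-1$ produces the full ordering claimed.

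The main ``obstacle'' — such as it is — is simply to notice that the correct alternative to feed into the greedy optimality at step $i$ is the \emph{union} $H_i + H_{i+1}$, rather than $H_{i+1}$ alone (which would only give a comparison between the step-$i+1$ marginal density and the step-$i$ marginal density measured against the same base $D_{i-1}$, and would then require supermodularity to transport the base from $D_{i-1}$ to $D_i$ in the wrong direction). With the combined-set alternative, the argument reduces to a one-line mediant computation, and both supermodularity and monotonicity of $f$ are unused in this particular claim.
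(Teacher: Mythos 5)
Your proof is correct and follows essentially the same route as the paper's: both arguments feed the combined set $H_i + H_{i+1}$ (in your indexing; $H_{i+1}+H_{i+2}$ in the paper's) into the greedy optimality of step $i$, obtaining the mediant inequality $a/b \geqs (a+c)/(b+d)$, and then observe that this is equivalent to $a/b \geqs c/d$ by cross-multiplication. The paper packages the finishing step as an application of the second part of Lemma~\ref{lem.2}, whose hypotheses presuppose nonnegativity of the numerators (hence monotonicity of $f$), whereas you carry out the algebra directly and correctly point out that only $\crd{H_i}, \crd{H_{i+1}} > 0$ is needed --- a genuine, if minor, streamlining.

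One small slip: your parenthetical attribution to the \emph{first} part of Lemma~\ref{lem.2}, ``applied to the pair $(a+c,b+d)$ and $(a,b)$,'' does not type-check. With that instantiation the lemma's hypothesis reads $(a+c)/(b+d) \geqs a/b$, which is the reverse of what the greedy step supplies; and its conclusion $c/d \geqs (a+c)/(b+d)$ is not the inequality you want. The inequality you are really using is the iff equivalence $a/b \geqs (a+c)/(b+d) \Leftrightarrow ad \geqs bc \Leftrightarrow a/b \geqs c/d$ (for $b,d>0$), which you spell out explicitly, so the proof itself is fine --- only the lemma citation should be dropped or replaced by a pointer to the paper's Lemma~\ref{lem.2}(ii) with $(a,b,c,d) \mapsto (a,b,a+c,b+d)$, which is the version the paper actually invokes.
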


\begin{proof} 
Consider any term in this sequence, say $\frac{f(D_{i+1}) - f(D_i)}{\crd{D_{i+1}} - \crd{D_i}}$. Note that 
$H_{i+1}$ was chosen as arg max of $\frac{f(D_i + X) - f(D_i)}{\crd{X}}$. Therefore, 
$\max_{X} \frac{f(D_i + X) - f(D_i)}{\crd{X}} = \frac{f(D_{i+1}) - f(D_i)}{\crd{D_{i+1}} - \crd{D_i}}$. Hence this quantity is larger
than $\frac{f(D_{i+2}) - f(D_i)}{\crd{D_{i+2}} - \crd{D_i}}$ (as long as $D_{i+2}$ is well defined). 
Now from the second part of Lemma~\ref{lem.2}, we get that 
\[
\frac{f(D_{i+1}) - f(D_i)}{\crd{D_{i+1}} - \crd{D_i}} \geqslant \frac{f(D_{i+2}) - f(D_i)}{\crd{D_{i+2}} - \crd{D_i}} 
\geqslant \frac{f(D_{i+2}) - f(D_{i+1})}{\crd{D_{i+2}} - \crd{D_{i+1}}}
\]
\end{proof}

Via an application of Lemma~\ref{lem.1}, we then have:
\begin{claim}
\label{YYY}
Given any $i$ ($1 \leqslant i \leqslant L$), the following holds:
\[
\frac{f(D_i)}{\crd{D_i}} \geqslant \frac{f(D_{i}) - f(D_{i-1})}{\crd{D_{i}} - \crd{D_{i-1}}}
\]
\end{claim}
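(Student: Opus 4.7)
The plan is to derive Claim~\ref{YYY} as an immediate consequence of Claim~\ref{XXX} and Lemma~\ref{lem.1}, by exploiting the telescoping decomposition
\[
f(D_i) = f(D_1) + \sum_{j=2}^{i} \bigl(f(D_j) - f(D_{j-1})\bigr), \qquad \crd{D_i} = \crd{D_1} + \sum_{j=2}^{i} \bigl(\crd{D_j} - \crd{D_{j-1}}\bigr).
\]
Thus the ratio $f(D_i)/\crd{D_i}$ is a ``weighted average'' (in the sense of Lemma~\ref{lem.1}) of the incremental ratios $\frac{f(D_j)-f(D_{j-1})}{\crd{D_j}-\crd{D_{j-1}}}$ for $j=1,\ldots,i$ (using the convention $f(D_0)=0$, $\crd{D_0}=0$, so that the $j=1$ term is $f(D_1)/\crd{D_1}$).

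Next I would invoke Claim~\ref{XXX}, which asserts that this sequence of incremental ratios is monotonically non-increasing in $j$. In particular, the last term is the smallest: for every $j \in \{1,\ldots,i\}$,
\[
\frac{f(D_j)-f(D_{j-1})}{\crd{D_j}-\crd{D_{j-1}}} \;\geqs\; \frac{f(D_i)-f(D_{i-1})}{\crd{D_i}-\crd{D_{i-1}}}.
\]
Setting $\theta := \frac{f(D_i)-f(D_{i-1})}{\crd{D_i}-\crd{D_{i-1}}}$ in Lemma~\ref{lem.1} and applying it inductively to collapse the telescoping sums (each pair of numerator and denominator has ratio $\geqs \theta$, so their pairwise sum does too, and we repeat) gives
\[
\frac{f(D_i)}{\crd{D_i}} \;\geqs\; \theta \;=\; \frac{f(D_i)-f(D_{i-1})}{\crd{D_i}-\crd{D_{i-1}}},
\]
which is precisely the desired inequality.

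The only step that requires any care is verifying that Lemma~\ref{lem.1} can indeed be chained in this way to handle a sum of $i$ terms rather than just two, but this is a routine induction on $i$: the base case $i=1$ is the trivial identity $f(D_1)/\crd{D_1} \geqs f(D_1)/\crd{D_1}$, and the inductive step combines the running partial sum (whose ratio is $\geqs\theta$ by the induction hypothesis) with the next incremental term (whose ratio is $\geqs\theta$ by Claim~\ref{XXX}) via Lemma~\ref{lem.1}. No supermodularity, matroid structure, or properties of $\hstar$ are needed here—the claim is purely a consequence of the greedy ordering established in Claim~\ref{XXX} and the elementary averaging lemma.
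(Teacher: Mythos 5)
Your proposal is correct and follows essentially the same route as the paper: both proofs combine the monotone non-increasing order of incremental ratios from Claim~\ref{XXX} with the averaging Lemma~\ref{lem.1}, proceeding by induction (the paper inducts directly on $i$ with $D_0=\emptyset$; you fix $i$ and collapse the telescoping sum, which is the same argument in different packaging). No substantive difference in ingredients or technique.
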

\begin{proof}
We will the prove the statement by induction.

\noindent
{\bf Base Case:} We implicitly assume that $D_0 = \emptyset$, and hence the case for $i = 1$ holds.

\noindent
{\bf Induction Step:}
Assume the statement by induction for $i =k$, and we prove it for $i = k+1$. Thus, by hypothesis we have
\[
\frac{f(D_k)}{\crd{D_k}} \geqslant \frac{f(D_{k}) - f(D_{k-1})}{\crd{D_{k}} - \crd{D_{k-1}}}
\]
Now by Claim~\ref{XXX} we have that 
\[
\frac{f(D_{k}) - f(D_{k-1})}{\crd{D_{k}} - \crd{D_{k-1}}} \geqslant \frac{f(D_{k+1}) - f(D_{k})}{\crd{D_{k+1}} - \crd{D_{k}}}
\]
Thus, 
\[
\frac{f(D_k)}{\crd{D_k}} \geqslant  \frac{f(D_{k+1}) - f(D_{k})}{\crd{D_{k+1}} - \crd{D_{k}}}
\]
Applying Lemma~\ref{lem.1}, we get:
\[
\frac{f(D_{k+1})}{\crd{D_{k+1}}} \geqslant \frac{f(D_{k+1}) - f(D_{k})}{\crd{D_{k+1}} - \crd{D_{k}}}
\]
Thus we have proven the Claim by induction.
\end{proof}

The analysis will be broken up into two parts. We will consider the set $D_\ell$ in the 
sequence $D_1, D_2, \cdots, D_L$ such that 
the following hold:
\[
\frac{f(D_\ell) - f(D_{\ell - 1})}{\crd{D_\ell} - \crd{D_{\ell - 1}}} \geqslant \frac{\dstar}{2}
\]
but 
\[
\frac{f(D_{\ell+1}) - f(D_{\ell})}{\crd{D_{\ell+1}} - \crd{D_{\ell}}} < \frac{\dstar}{2}
\]
Since $d(D_1) \geq \dstar$ by Claim~\ref{DDD}, such an $\ell$ will exist or $\ell = L$. If $\ell = L$, then we have a
feasible solution $D_L$ with the property that $\frac{f(D_L) - f(D_{L-1})}{\crd{D_L} - \crd{D_{L-1}}} \geqslant \frac{\dstar}{2}$. Therefore, by Claim~\ref{YYY} we have that $d(D_L) \geqslant \frac{\dstar}{2}$ and we are done 
in this case.

So we may assume that $\ell < L$ so that $D_\ell$ is {\em not} feasible. 
In this case, we will prove that $D'_\ell$ has the correct density, i.e. 
that $d(D'_\ell) \geqslant \frac{\dstar}{2}$.

To this end, 
we will prove two facts about $D_\ell$ and that will yield the desired result:

\begin{claim}
\label{GGG}
\[
f(D_\ell) - f(\dellhstar) \geqslant \frac{\dstar}{2}(\crd{D_\ell} - \crd{\dellhstar})
\]
\end{claim}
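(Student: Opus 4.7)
The plan is to telescope $f(D_\ell) - f(\dellhstar)$ along a chain of sets interpolating between $\dellhstar$ and $D_\ell$, using the greedy's argmax property together with supermodularity to lower-bound each increment. I would define $F_0 = \dellhstar$ and $F_i = F_{i-1} + (H_i \setminus \hstar)$ for $i = 1, \ldots, \ell$. Since the greedy blocks $H_1, \ldots, H_\ell$ partition $D_\ell$, one has $\dellhstar = \bigcup_{j=1}^{\ell}(H_j \cap \hstar)$, and so $F_\ell = D_\ell$. Thus $f(D_\ell) - f(\dellhstar) = \sum_{i=1}^\ell [f(F_i) - f(F_{i-1})]$, and the claim reduces to showing $f(F_i) - f(F_{i-1}) \geq \frac{\dstar}{2}\crd{H_i \setminus \hstar}$ for each $i$.

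For the per-step bound, the key observation is that $H_i \cap \hstar \subseteq H_i$ is a valid alternative candidate for the greedy at step $i$ (it is disjoint from $D_{i-1}$), so the argmax definition of $H_i$ yields
\[
\frac{f(D_{i-1} + (H_i \cap \hstar)) - f(D_{i-1})}{\crd{H_i \cap \hstar}} \leq \frac{f(D_i) - f(D_{i-1})}{\crd{H_i}}.
\]
Subtracting from the total $f(D_i) - f(D_{i-1})$ and using that $i \leq \ell$ makes the greedy marginal density at step $i$ at least $\dstar/2$,
\[
f(D_i) - f(D_{i-1} + (H_i \cap \hstar)) \geq \frac{f(D_i) - f(D_{i-1})}{\crd{H_i}} \cdot \crd{H_i \setminus \hstar} \geq \frac{\dstar}{2}\crd{H_i \setminus \hstar}.
\]

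Next I would verify the containment $F_{i-1} \supseteq D_{i-1} + (H_i \cap \hstar)$: $F_{i-1}$ contains $\dellhstar \supseteq \bigcup_{j \leq i}(H_j \cap \hstar)$ and by construction also $\bigcup_{j < i}(H_j \setminus \hstar)$, and together these cover $D_{i-1} + (H_i \cap \hstar)$. Supermodularity, in the form that marginals are non-decreasing in the base, then lifts the bound to the chain:
\[
f(F_i) - f(F_{i-1}) \geq f(D_i) - f(D_{i-1} + (H_i \cap \hstar)) \geq \frac{\dstar}{2}\crd{H_i \setminus \hstar},
\]
and summing over $i = 1, \ldots, \ell$ yields the claim.

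The main obstacle is exploiting the argmax property \emph{indirectly}. The greedy naturally bounds marginals relative to the base $D_{i-1}$, whereas the telescoping needs bounds relative to $F_{i-1}$, which already contains all of $\dellhstar$ rather than just $\bigcup_{j<i}(H_j \cap \hstar)$. The decomposition $H_i = (H_i \cap \hstar) + (H_i \setminus \hstar)$, together with the observation that $H_i \cap \hstar$ is itself a candidate the greedy could have chosen, extracts a bound on the marginal of the $\setminus \hstar$ piece at the greedy's base; supermodularity then bridges the gap from $D_{i-1} + (H_i \cap \hstar)$ to $F_{i-1}$. The edge cases where $H_i \cap \hstar$ or $H_i \setminus \hstar$ is empty are handled by the usual conventions on empty sets.
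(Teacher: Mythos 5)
Your proof is correct, and it takes a genuinely different route from the paper. The paper's own argument inducts on the partial-sum statement $f(D_i) - f(D_i\cap\hstar) \geqs \tfrac{\dstar}{2}\bigl(\crd{D_i}-\crd{D_i\cap\hstar}\bigr)$, so its chain ``grows both endpoints'' $(D_i,\ D_i\cap\hstar)$ in lockstep; the transfer from level $i$ to $i+1$ is carried out by a three-term chain combining the argmax choice, monotonicity, and supermodularity, and then invoking Lemma~\ref{lem.2} (to subtract one fraction from another) followed by Lemma~\ref{lem.1} (the mediant property). You instead freeze the lower endpoint at $\dellhstar$ and telescope along the nested chain $F_0=\dellhstar\subseteq F_1\subseteq\cdots\subseteq F_\ell=D_\ell$ with $F_i=\dellhstar+\bigcup_{j\leq i}(H_j\setminus\hstar)$. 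For each increment you (a) compare $H_i$ against the sub-candidate $H_i\cap\hstar$ in the argmax to get a lower bound on the ``leftover'' marginal at base $D_{i-1}+(H_i\cap\hstar)$, (b) use that the greedy marginal density at any step $i\leqs\ell$ is at least $\dstar/2$ (implicitly appealing to the monotone ordering from Claim~\ref{XXX}), and (c) use supermodularity to lift the bound from the smaller base $D_{i-1}+(H_i\cap\hstar)$ up to $F_{i-1}$, after checking the containment. The ingredients overlap heavily with the paper's, but your decomposition replaces the mediant-subtraction Lemma~\ref{lem.2} by an elementary linear interpolation and avoids the two-sided induction entirely; it is arguably cleaner, at the modest cost of introducing the auxiliary sets $F_i$ and verifying the containment $F_{i-1}\supseteq D_{i-1}+(H_i\cap\hstar)$. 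The edge cases ($H_i\cap\hstar=\emptyset$ or $H_i\setminus\hstar=\emptyset$) do degenerate gracefully as you note: the former reduces to the marginal-density bound directly, the latter contributes a trivial $0\geqs 0$.
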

\begin{proof}
Note that $D_\ell = H_1 + H_2 + \cdots + H_\ell$.
For brevity, for $1 \leqslant i \leqslant \ell$, denote $H_i \cap \hstar$ as $A_i$ (thus, $A_i \subseteq H_i$ for every $i$).
Thus, $\dellhstar = A_1 + A_2 + \cdots + A_\ell$.

We will prove the following 
statement by induction on $i$ (for $1 \leqslant i \leqslant \ell$):
\[
f(H_1 + H_2 + \cdots + H_i) - f(A_1 + A_2 + \cdots + A_i) \geqslant \frac{\dstar}{2}(\crd{H_1 + H_2 + \cdots + H_i} - \crd{A_1+ A_2 + \cdots + A_i})
\]

\noindent
{\bf Base Case:} For $i = 1$, we have to prove that:
 
\begin{align*}
\frac{f(H_1) - f(A_1)}{|H_1| - |A_1|} &\geqslant \frac{\dstar}{2}
\end{align*}
Since $H_1$ is the {\em densest} subset, we have 
\[
\frac{f(H_1)}{\crd{H_1}} \geqslant \frac{f(A_1)}{\crd{A_1}}
\]
and we may apply (the first part of) Lemma~\ref{lem.2} to obtain the desired.

\noindent
{\bf Induction Step:} Assume the statement to be true for $i$, 
and we will prove it for $i+1$.

Consider the following chain:
\begin{align*}
&\frac{f(H_1 + \cdots + H_i + H_{i+1}) - f(H_1+ \cdots +H_i)}{\crd{H_{i+1}}} \stackrel{H_{i+1} {\argmax}}{\geqslant} \\
&\frac{f(H_1 + \cdots + H_i + A_{i+1}) - f(H_1 + \cdots +H_i)}{\crd{A_{i+1}}} 
\stackrel{supermodular}{\geqslant} \\
&\frac{f(A_1 + \cdots + A_i + A_{i+1}) - f(A_1 + \cdots +A_i)}{\crd{A_{i+1}}} 
\end{align*}

We would now like to apply Lemma~\ref{lem.2} to the first and last terms in the above chain. 
To this end, let us check the preconditions: 
\begin{align*}
&f(H_1 + \cdots + H_i + H_{i+1}) - f(H_1+ \cdots + H_i) \stackrel{monotone}{\geqslant} f(H_1 + \cdots + H_i + A_{i+1}) - f(H_1 + \cdots + H_i) \\
&\stackrel{supermodular}{\geqslant} f(A_1 + \cdots + A_i + A_{i+1}) - f(A_1 + \cdots + A_i)
\end{align*}

Also, clearly, $\crd{H_{i+1}}\geqslant \crd{A_{i+1}}$. 

Thus,  the preconditions in Lemma~\ref{lem.2} hold, and we have that 
\begin{align*} 
&\frac{f(H_1 + \cdots + H_{i+1}) - f(A_1 + \cdots + A_{i+1}) - f(H_1 + \cdots + H_i) + f(A_1  + \cdots + A_i)}{\crd{H_{i+1}} - \crd{A_{i+1}} } \geqslant \\
&\frac{f(H_1 + \cdots + H_i + H_{i+1}) - f(H_1+ \cdots +H_i)}{\crd{H_{i+1}}}  \geqslant \frac{\dstar}{2}
\end{align*}

Applying Lemma~\ref{lem.1} to the first term in the above chain and the induction statement for $i$, 
we obtain the desired result for $i+1$. 
Hence done.
\end{proof}

The next claim lower bounds the value $f(\dellhstar)$. 

Building up to the Claim, let us note that $\dellhstar \neq \emptyset$. 
If the intersection were empty, then $\hstar$ is a subgraph of density $\dstar$, and so $H_{\ell +1}$ would 
be a subgraph of density at least $\dstar$. But then,  
\[
\frac{f(D_\ell + H_{\ell + 1}) - f(D_\ell)}{\crd{H_{\ell + 1}}} \stackrel{supermodular}{\geqslant} \frac{f(H_{\ell + 1})}{\crd{H_{\ell + 1}}} \geqslant \dstar
\]
But this contradicts the choice of $D_\ell$.

\begin{claim}
\label{HHH}
\[
f(D_\ell \cap \hstar) \geqslant \frac{\dstar}{2}\crd{\dellhstar}+ \frac{\dstar}{2}\crd{\hstar}
\]
\end{claim}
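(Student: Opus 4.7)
The plan is to combine the stopping condition on $D_\ell$ with supermodularity applied to the pair $(D_\ell, \hstar)$.

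First, I would set $B = \hstar \setminus D_\ell$, so that $\hstar = (\dellhstar) + B$ is a disjoint decomposition and $\crd{B} = \crd{\hstar} - \crd{\dellhstar}$. The boundary case $B = \emptyset$ is immediate: then $\dellhstar = \hstar$, so $f(\dellhstar) = f(\hstar) = \dstar\crd{\hstar} = \tfrac{\dstar}{2}\crd{\dellhstar} + \tfrac{\dstar}{2}\crd{\hstar}$, and the inequality holds. So henceforth I would assume $B \neq \emptyset$.

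Next, I would invoke the defining inequality of $\ell$. Because $H_{\ell+1}$ is the $\argmax$ of the marginal density over sets disjoint from $D_\ell$, applying it to the candidate $X = B$ gives
\[
\frac{f(D_\ell + B) - f(D_\ell)}{\crd{B}} \;\leqs\; \frac{f(D_{\ell+1}) - f(D_\ell)}{\crd{H_{\ell+1}}} \;<\; \frac{\dstar}{2},
\]
so $f(D_\ell + B) - f(D_\ell) < \tfrac{\dstar}{2}\crd{B}$.

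The second ingredient is supermodularity applied to $D_\ell$ and $\hstar$: since $D_\ell \cup \hstar = D_\ell + B$ and $D_\ell \cap \hstar = \dellhstar$,
\[
f(D_\ell + B) + f(\dellhstar) \;\geqs\; f(D_\ell) + f(\hstar),
\]
which rearranges to $f(\hstar) - f(\dellhstar) \leqs f(D_\ell + B) - f(D_\ell)$. Chaining with the previous bound,
\[
f(\hstar) - f(\dellhstar) \;<\; \frac{\dstar}{2}\crd{B} \;=\; \frac{\dstar}{2}\bigl(\crd{\hstar} - \crd{\dellhstar}\bigr).
\]
Substituting $f(\hstar) = \dstar \crd{\hstar}$ and rearranging yields
\[
f(\dellhstar) \;>\; \dstar\crd{\hstar} - \frac{\dstar}{2}\crd{\hstar} + \frac{\dstar}{2}\crd{\dellhstar} \;=\; \frac{\dstar}{2}\crd{\dellhstar} + \frac{\dstar}{2}\crd{\hstar},
\]
which is the desired inequality.

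The only genuinely delicate point is making sure the stopping condition can be used with $X = B$, i.e.\ that $B$ is a legitimate nonempty competitor disjoint from $D_\ell$; this is exactly where the empty-$B$ case needs to be handled separately. Everything else is a mechanical combination of the $\ell$-stopping inequality, the supermodularity inequality, and the identity $f(\hstar) = \dstar\crd{\hstar}$.
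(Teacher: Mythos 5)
Your proof is correct and is essentially identical to the paper's: you set $B = \hstar \setminus D_\ell$ (the paper's $X$), bound the marginal density of $B$ by the stopping condition for $\ell$, and then apply supermodularity to the pair $(D_\ell, \hstar)$ to transfer the bound to $f(\dellhstar)$. Your explicit treatment of the $B = \emptyset$ case is a harmless extra precaution (in fact, $B = \emptyset$ cannot occur, since $\hstar \subseteq D_\ell$ together with the upward monotonicity of co-matroid constraints would make $D_\ell$ feasible, contradicting $\ell < L$).
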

\begin{proof}
Let $X = \hstar - \dellhstar$. Then, $X\cap D_\ell = \emptyset$ and $D_\ell + X = D_\ell \cup \hstar$. Then by definition of $D_\ell$, we know that 
$\frac{f(D_\ell + X) - f(D_\ell)}{\crd{X}} \leqslant \frac{f(D_{\ell+1}) - f(D_\ell)}{\crd{D_{\ell+1}} -\crd{D_\ell}} < \dstar/2$. Thus, 
$f(D_\ell \cup \hstar) - f(D_\ell) \leqslant \frac{\dstar}{2}(\crd{\hstar} - \crd{\dellhstar})$.

Therefore, 
$f(D_\ell \cup \hstar) + f(\dellhstar) \leqslant f(D_\ell) + f(\dellhstar) + \frac{\dstar}{2}(\crd{\hstar} - \crd{\dellhstar})$. 

Applying supermodularity we have that $f(D_\ell \cup \hstar) + f(\dellhstar) \geqslant f(D_\ell) + f(\hstar)$. 
Thus, cancelling $f(D_\ell)$ gives us that 
$f(\dellhstar) + \frac{\dstar}{2}(\crd{\hstar} - \crd{\dellhstar}) \geqslant f(\hstar)$. The claim follows by 
observing that $\dstar = \frac{f(\hstar)}{\crd{\hstar}}$.
\end{proof}

Note that this claim also implies that the density of the set $\dellhstar$ is at least $\dstar$. Intuitively, $\dellhstar$ is a subset that has 
``enough $f$-value'' as well as a ``good'' density. 

We may now combine the statements of Claim~\ref{GGG} and Claim~\ref{HHH} to get the following chain of inequalities:
\begin{align*}
f(D_\ell) \stackrel{Claim~\ref{GGG}}{\geqslant}
 f(\dellhstar) +\frac{\dstar}{2}\crd{D_\ell}  - \frac{\dstar}{2}\crd{\dellhstar} &\stackrel{Claim~\ref{HHH}}{\geqslant} 
\frac{\dstar}{2}\crd{D_\ell} +  \frac{\dstar}{2}\crd{\hstar}
\end{align*}

Consider $D'_\ell$:  this is obtained from $D_\ell$ by adding suitably many elements to make $D_\ell$ feasible. Let $r$ be the minimum {\em number} of elements to be added to $D_\ell$ so as to make it feasible. 
Since $\hstar$ is a feasible solution too, clearly, $r \leqslant \crd{\hstar}$. 
With this motivation, we define the {\em Extension Problem} for a matroid $\calM$. The input is a matroid $\calM = (U, \calI)$ and a subset $A \subseteq U$. The goal is to 
find a subset $T$ of minimum cardinality such that $\overline{A\cup T} \in \calI$. Lemma~\ref{extension} shows that we can find such a subset $T$ in polynomial time. 
Thus, we would have that:
\[
d(D'_\ell) = \frac{f(D'_\ell)}{\crd{D_\ell} + r} \geqslant \frac{f(D_\ell)}{\crd{D_\ell} + r} \geqslant \frac{f(D_\ell)}{\crd{D_\ell} + \crd{\hstar}} \geqslant \dstar/2
\]
and we are done with the proof of Theorem~\ref{main.thm}, modulo the proof of Lemma~\ref{extension}.

We proceed to present the lemma and its proof:
\begin{lemma}
\label{extension}
The Extension Problem for matroid $\calM$ and subset $A$ can be solved in polynomial time.
\end{lemma}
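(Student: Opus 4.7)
The plan is to reduce the Extension Problem to the classical task of finding a maximum-cardinality independent set of the matroid contained in a prescribed subset, which is solvable by the greedy algorithm.

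First I would observe that any element of $T$ lying in $A$ is wasted: since $A \cup T = A \cup (T \setminus A)$, we may as well require $T \subseteq U \setminus A$. Write $B = U \setminus A$ and let $I = B \setminus T$. Then the condition $\overline{A + T} \in \calI$ becomes exactly $I \in \calI$, and $\crd{T} = \crd{B} - \crd{I}$. So minimizing $\crd{T}$ is equivalent to maximizing $\crd{I}$ over all independent sets $I \subseteq B$, which is to say, computing the rank $r(B)$ and an independent set of $\calM$ inside $B$ that attains this rank.

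Next I would invoke the standard matroid greedy algorithm on $B$: initialize $I \gets \emptyset$, list the elements of $B$ in arbitrary order, and for each element $x$ add $x$ to $I$ precisely when $I + x \in \calI$. This uses $\crd{B} \leqslant \crd{U}$ independence-oracle calls and so runs in polynomial time. Upon termination, output $T = B \setminus I$.

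The one step that needs justification is that this greedy $I$ is a maximum-cardinality independent subset of $B$. This is a direct application of the Extension Property of $\calM$: if some independent $J \subseteq B$ had $\crd{J} > \crd{I}$, then the Extension Property would supply an element $x \in J \setminus I \subseteq B$ with $I + x \in \calI$, and the greedy procedure would necessarily have added $x$ at the iteration in which it was considered, contradicting its omission from $I$. This is the only substantive ingredient, and it is essentially built into the definition of a matroid, so I expect no real obstacle beyond spelling out this exchange argument.
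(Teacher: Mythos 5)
Your proof is correct, and it rests on exactly the same reduction the paper uses, just formalized more directly. You observe that an optimal $T$ may be taken disjoint from $A$, so minimizing $\crd{T}$ is the same as maximizing the cardinality of an independent set $I\subseteq U\setminus A$, and you solve that with the matroid greedy algorithm. The paper reaches the identical conclusion by a slightly longer route: it writes an integer program for the Extension Problem, passes to its LP relaxation, observes that after eliminating the $x$ variables one gets the matroid independent-set polytope with $y_i=0$ enforced on $A$ (program $\mathrm{LP}_2$), and then appeals to the folklore fact that this polytope is integral and optimizable by greedy. Your version avoids the LP machinery altogether, which is cleaner here; the paper's LP framing is presumably chosen for stylistic consistency with the LP-based arguments appearing elsewhere in the paper, but it buys nothing extra for this lemma. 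One small point worth making explicit in your exchange argument: when the Extension Property yields $x\in J\setminus I$ with $I+x\in\calI$, the set that greedy held at the iteration in which it considered $x$ was some $I'\subseteq I$, and one needs the Hereditary Property applied to $I+x$ to conclude $I'+x\in\calI$ and hence that greedy would have taken $x$. You allude to this being built into the matroid axioms, which is fair, but the Hereditary Property (not just the Extension Property) is doing real work there.
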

\begin{proof}
The proof considers the {\em base polyhedron} of the matroid (see the text by Schrijver~\cite{schrijver-book}). We will have a variable $x_i$ for each element $i \in U \setminus A$, where 
$x_i = 1$ would indicate that we pick the element $i$ in our solution $T$. For brevity, we will also 
maintain a variable $y_i$ that indicates whether $i$ is {\em absent} from the solution $T$. Thus 
for every $i$, we will maintain that $x_i + y_i = 1$. 
Given an arbitrary set $S$, we will let $r(S)$ denote the {\em rank} of the subset $S$ in the matroid $\calM$.

The following is a valid integer program for the Extension Problem (where $y(S)$ is shorthand for 
$\sum_{i \in S} y_i$). The linear program to the right is the relaxation of the integer program, and 
with variables $x_i$ eliminated. 
\[
\mathrm{IP_1:}
\begin{array}{cll}
\min &{\displaystyle \sum_{i \in U} x_i}\\[1.0em]
{\rm s.t.} &{\displaystyle 
x_i + y_i = 1}&{\rm for~all~}i \in U\\[0.25em]
&{\displaystyle y(S) \leqslant r(S)}&{\rm for~all~} S\subseteq U\\[0.25em]
&{\displaystyle x_i = 1}&{\rm for~all~} i \in A\\[0.25em]
&{\displaystyle
x_i, y_i \in \{0,1\}} & {\rm for~all~} i \in U\ .
\end{array}
\quad
\mathrm{LP_1:}
\begin{array}{cll}
\min &{\displaystyle \sum_{i \in U} (1 - y_i)}\\[1.0em]
{\rm s.t.} 
&{\displaystyle y(S) \leqslant r(S)}&{\rm for~all~} S\subseteq U\\[0.25em]
&{\displaystyle y_i = 0}&{\rm for~all~} i \in A\\[0.25em]
&{\displaystyle
y_i\geqslant 0} & {\rm for~all~} i \in U\ .
\end{array}
\]

The linear program $\lp_1$ can also be formulated as a maximization question. 
To be precise, let $\val(\lp_1)$ denote the {\em value} of the program $\lp_1$. Then 
$\val(\lp_1) = \crd{U} - \val(\lp_2)$, where $\lp_2$ is as follows:
\[
\mathrm{LP_2:}
\begin{array}{cll}
\max &{\displaystyle \sum_{i \in U} y_i}\\[1.0em]
{\rm s.t.} 
&{\displaystyle y(S) \leqslant r(S)}&{\rm for~all~} S\subseteq U\\[0.25em]
&{\displaystyle y_i = 0}&{\rm for~all~} i \in A\\[0.25em]
&{\displaystyle
y_i\geqslant 0} & {\rm for~all~} i \in U\ .
\end{array}
\]

Now, by folklore results in matroid theory (cf. \cite{schrijver-book}), we have that solutions to $\lp_2$ are integral and can be found by a greedy algorithm. Thus, we can solve $\ip_1$ in polynomial time, and this proves the statement of the Lemma. 
\end{proof}

\section{Proof of Theorem~\ref{knapsack}}
\label{app:knapsack}
In order to prove this result, we will have to modify the algorithm presented in Section~\ref{proofmain}. In the analysis, we will correspondingly modify the definition of the set $D_\ell$. Then we will apply (modified versions of) Claim~\ref{GGG} and Claim~\ref{HHH} to 
derive the result. 

The modified algorithm is as shown in Figure~\ref{knap-algo}.

\begin{figure}
\begin{center}
\begin{boxedminipage}{0.9\hsize}
\begin{algorithmic}
\State $i \gets 1$
\State $H_i \gets \argmax_{X} \frac{f(X)}{\crd{X}}$
\State $D_i \gets H_i$
\While{$D_i ~\mathrm{ infeasible}$}
\State $H_{i+1} \gets \argmax_{X: X\cap D_i = \emptyset} \frac{f(D_i + X) - f(D_i)}{\crd{X}}$
\State $D_{i+1} \gets D_i + H_{i+1}$
\State $i \gets i+1$
\EndWhile
\State $L \gets i$
\For{$i = 1 \to L$} 
\State{Order the vertices $i$ in $U\setminus D_i$ by non-increasing order of weights $w_i$}
\State{Add vertices from $U \setminus D_i$ in this order to $D_i$ until feasibility is attained}
\State{Let the result be $D'_i$}
\EndFor
\State {Output the subset among the $D'_i$'s with the highest density}
\end{algorithmic}
\end{boxedminipage}
\end{center}
\caption{Algorithm for Knapsack constraints}
\label{knap-algo}
\end{figure}

Consider the set $D_\ell$ in the 
sequence $D_1, D_2, \cdots, D_L$ such that 
the following hold:
\[
\frac{f(D_\ell) - f(D_{\ell - 1})}{\crd{D_\ell} - \crd{D_{\ell - 1}}} \geqslant \frac{\dstar}{3}
\]
but 
\[
\frac{f(D_{\ell+1}) - f(D_{\ell})}{\crd{D_{\ell+1}} - \crd{D_{\ell}}} < \frac{\dstar}{3}
\]
As earlier, if there is no such $\ell < L$ for which this holds, this implies that $D_L$ satisfies 
$\frac{f(D_L) - f(D_{L - 1})}{\crd{D_L} - \crd{D_{L - 1}}} \geqslant \dstar/3$. But this gives a $3$-approximation in this case since $D_L$ is feasible and 
\[
d(D_L) \geqslant \frac{f(D_L) - f(D_{L - 1})}{\crd{D_L} - \crd{D_{L - 1}}}
\]

Let us consider the other case where $\ell < L$ and $D_\ell$ is {\em infeasible}. Let $\hstar$ denote the optimal solution.

We state modified versions of Claim~\ref{GGG} and Claim~\ref{HHH}.
\begin{claim}(modified Claim~\ref{GGG})
\label{KKK}
\[
f(D_\ell) - f(\dellhstar) \geqslant \frac{\dstar}{3}(\crd{D_\ell} - \crd{\dellhstar})
\]
\end{claim}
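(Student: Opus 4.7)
The plan is to mirror the proof of Claim~\ref{GGG} verbatim, with the threshold $\dstar/2$ replaced throughout by $\dstar/3$. Writing $D_\ell = H_1 + H_2 + \cdots + H_\ell$ and setting $A_i = H_i \cap \hstar$ so that $\dellhstar = A_1 + A_2 + \cdots + A_\ell$, I would prove by induction on $i \in \{1,\dots,\ell\}$ the strengthened statement
\[
f(H_1+\cdots+H_i) - f(A_1+\cdots+A_i) \geqslant \frac{\dstar}{3}\bigl(\crd{H_1+\cdots+H_i} - \crd{A_1+\cdots+A_i}\bigr),
\]
from which Claim~\ref{KKK} follows by specializing to $i=\ell$.

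The base case $i=1$ is immediate: since $H_1$ is the globally densest set and $A_1 \subseteq H_1$, we have $f(H_1)/\crd{H_1} \geqslant f(A_1)/\crd{A_1}$, and the first part of Lemma~\ref{lem.2} then yields $(f(H_1)-f(A_1))/(\crd{H_1}-\crd{A_1}) \geqslant f(H_1)/\crd{H_1} \geqslant \dstar \geqslant \dstar/3$. For the inductive step, I would replay the three-link chain from the proof of Claim~\ref{GGG}: (i) use the argmax property of $H_{i+1}$ to compare its marginal density to that of $A_{i+1}$ over $D_i$; (ii) use supermodularity to lower-bound this by the marginal of $A_{i+1}$ over $A_1+\cdots+A_i$; (iii) apply the first part of Lemma~\ref{lem.2} to peel off these marginals from the numerator and denominator of the induction-hypothesis fraction, and then close by invoking Lemma~\ref{lem.1}. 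The preconditions for Lemma~\ref{lem.2} follow from monotonicity of $f$ and from $\crd{H_{i+1}} \geqslant \crd{A_{i+1}}$, exactly as in the Section~\ref{proofmain} argument.

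The only substantive difference from Claim~\ref{GGG} lies in which lower bound can be asserted on the marginal density at each intermediate step, and this is precisely where the modified stopping criterion of the knapsack algorithm pays off: by the new definition of $\ell$, the marginal $(f(D_\ell)-f(D_{\ell-1}))/(\crd{D_\ell}-\crd{D_{\ell-1}})$ is at least $\dstar/3$, and Claim~\ref{XXX}---whose proof depends only on supermodularity and the argmax choice of the $H_i$'s, and so transfers unchanged to the knapsack algorithm---guarantees that the same lower bound $\dstar/3$ holds for every earlier marginal with $i < \ell$. I do not anticipate any real obstacle: the whole argument is a constant-by-constant rewriting of Claim~\ref{GGG}, and the one thing to watch is that Claim~\ref{XXX} really is being applied under the modified algorithm so that all $\ell - 1$ earlier marginal densities are at least $\dstar/3$, which is exactly what is needed to feed into Lemma~\ref{lem.1} and close the induction with the weaker threshold.
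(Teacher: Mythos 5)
Your proposal is correct and matches the paper's own intent exactly: the paper supplies no written proof of Claim~\ref{KKK}, asserting only that it ``may be proven analogously'' to Claim~\ref{GGG} under the modified stopping rule. You have filled in that analogy in full, including the one nontrivial check worth making explicit --- that Claim~\ref{XXX} transfers unchanged to the knapsack algorithm (it does, since the $H_i$'s are chosen identically and Claim~\ref{XXX} relies only on supermodularity and the argmax choice), so the stopping threshold $\dstar/3$ propagates to every earlier marginal and slots into the base case, the chain, and the final application of Lemma~\ref{lem.1} in place of $\dstar/2$.
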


\begin{claim}(modified Claim~\ref{HHH})
\[
f(D_\ell \cap \hstar) \geqslant \frac{\dstar}{3}\crd{\dellhstar}+ \frac{2\dstar}{3}\crd{\hstar}
\]
\end{claim}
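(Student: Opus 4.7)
The plan is to mimic the proof of Claim~\ref{HHH} essentially verbatim, with the sole change that the stopping threshold has been relaxed from $\dstar/2$ to $\dstar/3$; I want to track how this rescaling propagates to the final right-hand side.

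First I would set $X := \hstar \setminus \dellhstar$, so that $X$ is disjoint from $D_\ell$ and $D_\ell + X = D_\ell \cup \hstar$. (If $X = \emptyset$, then $\hstar \subseteq D_\ell$, so $\dellhstar = \hstar$, and the claim reduces to $f(\hstar) \geqslant \dstar\,\crd{\hstar}$, which is just the definition of $\dstar$; so I may assume $X \neq \emptyset$.) Using that $H_{\ell+1}$ is chosen as the $\argmax$ in the knapsack algorithm, together with the defining condition $\frac{f(D_{\ell+1}) - f(D_\ell)}{\crd{D_{\ell+1}} - \crd{D_\ell}} < \dstar/3$, I would obtain
\[
\frac{f(D_\ell \cup \hstar) - f(D_\ell)}{\crd{X}} \;\leqslant\; \frac{f(D_{\ell+1}) - f(D_\ell)}{\crd{D_{\ell+1}} - \crd{D_\ell}} \;<\; \frac{\dstar}{3},
\]
so $f(D_\ell \cup \hstar) - f(D_\ell) < \frac{\dstar}{3}\bigl(\crd{\hstar} - \crd{\dellhstar}\bigr)$.

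Next I would invoke supermodularity of $f$ on the pair $D_\ell, \hstar$ to get $f(D_\ell) + f(\hstar) \leqslant f(D_\ell \cup \hstar) + f(\dellhstar)$. Combining this with the previous bound and cancelling the $f(D_\ell)$ terms leaves $f(\hstar) - f(\dellhstar) \leqslant \frac{\dstar}{3}\bigl(\crd{\hstar} - \crd{\dellhstar}\bigr)$; substituting $f(\hstar) = \dstar\,\crd{\hstar}$ and rearranging yields exactly
\[
f(\dellhstar) \;\geqslant\; \frac{\dstar}{3}\crd{\dellhstar} + \frac{2\dstar}{3}\crd{\hstar},
\]
as desired.

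I do not anticipate any real obstacle: the argument is a direct port of the proof of Claim~\ref{HHH}. The coefficients $1/3$ and $2/3$ in the target inequality arise naturally, because the contribution of $\crd{\hstar}$ on the right becomes $(1 - 1/3)\dstar$ in place of the earlier $(1 - 1/2)\dstar$. The only sanity check is that one is entitled to apply the marginal-ratio bound at the step $D_\ell \to D_\ell \cup \hstar$, which is guaranteed by the choice of $H_{\ell+1}$ as the best incremental block and by the stopping condition that defines $\ell$ in the knapsack analysis.
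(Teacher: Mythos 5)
Your proof is correct and follows essentially the same route as the paper's proof of Claim~\ref{HHH}, which the paper explicitly says carries over ``analogously'' with the $\dstar/3$ threshold; the rescaling of the coefficients to $1/3$ and $2/3$ comes out exactly as you computed. The only (harmless) addition is your explicit handling of the degenerate case $X=\emptyset$, which the paper leaves implicit.
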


These modified claims may be proven analogously to the original claims, taking into account the
new definition for $D_\ell$.

Now note that given a set $D_\ell$, in order to make the set feasible for the knapsack cover
constraint, we pick the elements with the largest weights $w_i$ so that feasibility is attained. 
The usual knapsack greedy algorithm shows that this is a $2$-approximation to the 
optimal knapsack cover. Thus, if we add $r$ elements, then $r \leq 2\hstar$. 
Thus we have that, 
\[
\frac{f(D'_\ell)}{D'_\ell} \geqslant \frac{f(D_\ell)}{D_\ell + r} \geqslant \frac{f(D_\ell)}{D_\ell + 2\hstar} 
\geqslant \frac{\dstar}{3}
\]
Thus we have proven a $3$-approximation.

\section{Proof of Theorem~\ref{depden}}
\label{sec:depden}
We will present the proof for the case of the graph density function, i.e. where $f(S) = \crd{E(S)}$. 
The proof for arbitrary $f$ will require a passage to the Lov\'{a}sz Extension ${\cL}_f(x)$ of a set function 
$f(S)$.
In fact we will present {\em two} proofs of this fact for the special case of the graph density function. 
To the best of our knowledge, both the proofs are new, and seems simpler than existing proofs. 
For both the proofs, we will use the same $\lp$.

\noindent
{\bf First Proof:}

We will {\em augment} the $\lp$ that Charikar~\cite{charikar00} uses to prove that graph density is computable in polynomial time. Given a graph $G = (V, E)$, there are edge variables $y_e$ and vertex variables $x_i$ in the $\lp$. We are also 
given an auxiliary {\em dependency} digraph $D = (V, \vec{A})$ on the vertex set $V$. In the augmented $\lp$, we also have
constraints $x_i \leqslant x_j$ if there is an arc from $i$ to $j$ in the digraph $D= (V, \vec{A})$.
The $\depden$ problem is modelled by the linear program $\lp_3$. 
\[
\mathrm{LP_3:}
\begin{array}{cll}
\max &{\displaystyle \sum_{e \in E} y_e }\\[1.0em]
{\rm s.t.} &{\displaystyle 
\sum_{i} x_i = 1}& \\[0.25em]
&{\displaystyle y_e \leqslant x_i}&{\rm for~all~} e \sim i, e \in E\\[0.25em]
&{\displaystyle x_i \leqslant x_j}&{\rm for~all~} (i, j) \in \vec{A}\\[0.25em]
&{\displaystyle
x_i\geqslant 0} & {\rm for~all~} i \in V(G)\ .
\end{array}
\quad
\mathrm{CP_1:}
\begin{array}{cll}
\max &{\displaystyle \sum_{e = (i,j)\in E} \min\{x_i, x_j\} }\\[1.0em]
{\rm s.t.} &{\displaystyle 
\sum_{i} x_i = 1}& \\[0.25em]
&{\displaystyle x_i \leqslant x_j}&{\rm for~all~} (i, j) \in \vec{A}\\[0.25em]
&{\displaystyle
x_i\geqslant 0} & {\rm for~all~} i \in V(G)\ .
\end{array}
\]

Suppose we are given an optimal solution $\hstar$ to the $\depden$ problem. 
Let $\val(\lp_3)$ denote the 
feasible value of this $\lp$: we will prove that $\val(\lp_3)  = d(\hstar)$.

\noindent
{\bf $\val(\lp_3)\geqslant d(\hstar)$:}

We let $\crd{\hstar} = \ell$, and $x_i = 1/\ell$ for $i \in \hstar$, and $0$ otherwise. 
Likewise, we set $y_e = 1/\ell$ for $e \in E(\hstar)$, and $0$ otherwise. Note that $\hstar$
is feasible, so if $a \in \hstar$ and $(a, b) \in \vec{A}$, then it also holds that $b\in \hstar$.
We may check that the assignment $x$ and $y$ is feasible for the $\lp$. So, $d(\hstar) = \frac{\crd{E(\hstar)}}{\ell}$ is achieved as the value of a feasible assignment to the $\lp$. 
 
\noindent
{\bf $\val(\lp_3)\leqslant d(\hstar)$:}

In the rest of the proof, we will prove that there exists a subgraph $H$ such that  $\mathrm{VAL} \leqslant d(H)$. 
First, it is easy to observe that in any optimal solution of the above $\lp$, the variables $y_e$ will take the values $\min\{x_i, x_j\}$ where $e = (i,j)$.
Thus, we may eliminate the variables $y_e$ from the  program $\lp_3$ to obtain the program $\cp_1$.
We claim that $\cp_1$ is a convex program. 
Given two concave functions, the $\min$ operator preserves concavity. Thus, the objective function of the above modified program is concave. Hence we have a convex program: here, the objective to be {\em maximized} is concave, subject to linear constraints. We may solve the program $\cp_1$ and get an output
 optimal solution ${x^{*}}$. Relabel the vertices of $V$ such that the following holds:
$x^{*}_1 \geqslant x^{*}_2 \geqslant \cdots \geqslant x^{*}_n$. If there are two vertices with (modified) indices $a$ and $b$ where $a < b$ and there is an arc $(a,b) \in \vec{A}$, then we have the equalities 
$x^{*}_a = x^{*}_{a+1} = \cdots = x^{*}_b$. We will replace the inequalities in the program $\cp_1$ as follows:
\[
\mathrm{LP_4:}
\begin{array}{cll}
\max &{\displaystyle \sum_{e = (i,j)\in E: i < j} x_j }\\[1.0em]
{\rm s.t.} &{\displaystyle 
\sum_{i} x_i = 1}& \\[0.25em]
&{\displaystyle x_i \geqslant x_{i+1}}&{\rm for~all~} i \in \{1, 2, \cdots, (n-1)\}\\[0.25em]
&{\displaystyle
x_n\geqslant 0} &\ .
\end{array}
\]

\noindent
where some of the inequalities $x_i \geqslant x_{i+1}$ may be equalities if there is an index $a$ with $a \leqslant i$ and an index $b$ with $b \geqslant (i+1)$ such that $(a, b) \in \vec{A}$. Note also that because of the ordering of the variables of this $\lp$, the objective function also simplifies and becomes a linear function. Clearly $x^{*}$ is a feasible solution to this $\lp$. Thus the value of this $\lp$ is no less than the value of $\cp_1$. Consider a $\bfs$ $x$ to $\lp_4$.  The program $\lp_4$ has $(n+1)$ constraints, 
and $n$ variables.
Given the $\bfs$ $x$, call a constraint {\em non-tight} if it does not hold with equality under the solution $x$. Thus, there may be at most one {\em non-tight} constraint in $\lp_4$.
In other words, there is at most one constraint $x_i \geqslant x_{i+1}$ that is a strict inequality. This, in turn, implies that all the non-zero values in $x$ are equal. Let there be $\ell$ such non-zero values. From the equality $\sum_i x_i = 1$, we get that each non-zero $x_i = 1/\ell$. Let $H$ denote the set of indices $i$ which have non-zero $x_i$ values. Then the objective value corresponding to this $\bfs$ $x$ is $\crd{E(H)}/\ell = d(H)$. 

Thus we have proven that $d(H) \geqslant \val(\lp_4) \geqslant \val(\cp_1) = \val(\lp_3)$, as required.
This completes the proof of Theorem~\ref{depden}.

\qed

\noindent
{\bf Remarks about the proof:}
\begin{itemize}
\item
We remark that the objective in the convex program $\cp_1$ is precisely the Lov\'{a}sz Extension ${\cL}_f(x)$ for the specific function $f = \crd{E(S)}$. Thus our proof shows that the $\lp$ provided by Charikar~\cite{charikar00} is precisely the Lov\'{a}sz Extension for the specific supermodular function $\crd{E(S)}$.
\item
Note that there are other proofs possible for this result. For instance, one can follow the basic argument of Charikar to show that $\lp_3$ satisfies $d(\hstar) = \val(\lp_3)$. The proof we provide above is new, and is inspired by the work of Iwata and Nagano~\cite{iwata-nagano09}. 
\item
Via our proof, we also prove that any $\bfs$ for the basic graph density $\lp$ has the property that all the
non-zero values are equal. This fact is not new: it was proven by Khuller and Saha~\cite{ks09} but we believe our proof of this fact is more transparent. 
\end{itemize}

\vspace{0.3in}

\noindent
{\bf Second Proof:}

Again, let us consider the program $\lp_3$. 

Similar to the above, it will suffice to prove that the $\bfs$ solutions to this $\lp$ have the property that all non-zero 
components are equal.

Consider the {\em constraint} matrix $B$ that consists of the LHS of the {\em non-trivial} constraints in the above $\lp$, 
without the constraint $\sum_i x_i = 1$. Thus $B$ consists of rows for the constraints $y_e \leqs x_i$ 
(for $e \in E: e \sim i$)
and the constraints $x_i \leqs x_j$ for $(i,j)\in \vec{A}$. 
The matrix $B$ is $\tum$: this is because it can easily be realised as the {\em incidence matrix} of 
a digraph. 

Thus the original constraint matrix consists of the matrix $B$ augmented by a single (non-trivial) constraint, consisting of the sum of the $x_i$'s being equal to $1$; and also the (trivial) nonnegativity constraints 
$x_i \geq 0$ and $y_e \geq 0$. Let $B'$ denote this augmented matrix. Note that 
$B'$ need not be $\tum$. 

Consider a basic feasible solution ($\bfs$) $\widetilde{v} = (y_1, \cdots, y_e, \cdots, y_m, x_1, \cdots, x_i, \cdots, x_n)$.
Since $\widetilde{v}$ consists of $(m+n)$ variables, there are $(m+n)$ constraints in the constraint matrix
$B'$ that are tight. Consider the submatrix $T$ formed by the tight constraints in the matrix $B'$. 
Since the constraint $\sum_i {x}_i = 1$ is always tight, this will be included as a row in the matrix $T$. 
Without loss of generality, let this row be the {\em last} row $r$ of $T$. Thus, $\widetilde{v}$ is the {\em unique} solution to the linear system $Tv = b$, where $b^{T} = (0, 0, \cdots, 0, 1)$. 

Note that, by previous considerations, the submatrix $T'$ of the matrix $T$ consisting of all the rows of $T$ but the {\em last} one, is $\tum$ (since $T'$ is then a submatrix of the matrix $B$).
The $s^{th}$ component (for $1 \leq s \leq (m+n)$) of $\widetilde{v}$ may be found by 
Cramer's rule as $\widetilde{v}_{s}  = \frac{\det(T_s)}{\det(T)}$, where $T_s$ is the matrix $T$ with the 
$s^{th}$ column {\em replaced} by the vector $b$. 

Note that $\det(T)$ is at most $|V(G)|= n$. This is because 
the row $r$ has at exactly $n$ $1$'s, so we may expand the determinant by row $r$. Any sub-determinant
to be computed in this row-wise expansion of the determinant is a submatrix of $T$, thus is 
$\tum$. Therefore, $\det(T)$ is a sum of at most $n$ $+1$'s and $-1$'s thus, is (say) $k$ where $k \leq n$.

Consider the computation of $\det(T_s)$. The matrix $T_s$ has its $s^{th}$ column replaced by the vector $b$,
which has precisely one $1$. So we may expand the determinant of $T_s$ by its $s^{th}$ column, and thereby, 
the determinant is that of a square submatrix of the matrix $T$. This means that $\det(T_s)$ is $0$, $1$ or 
$-1$.

Thus, every component of $\widetilde{v}$ is precisely $0$ or $\frac{1}{k}$, or $-\frac{1}{k}$. However every component in the $\lp$ is $\geq 0$, thus the third possibility is excluded. This completes the proof.
\qed

\subsection{Arbitrary monotone supermodular functions:}

We now proceed to consider the case 
where we are given an arbitrary monotone supermodular function $f$ over the universe $U$ and a directed graph 
$D = (U, \vec{A})$, where the arcs in $\vec{A}$ specify the dependencies. 

To extend our results to this, we will need the concept of the Lov\'{a}sz Extension.

The Lov\'{a}sz Extension ${\cL}_f : [0,1]^{U} \rightarrow \mathbb{R}$, first defined by Lov\'{a}sz,
is an extension of an arbitrary set function $f : 2^U \rightarrow \mathbb{R}$. 
We proceed with the formal definition:

\begin{definition}(Lov\'{a}sz Extension)
Fix $x \in [0,1]^U$, and let $U = \{v_1, v_2, \cdots, v_n\}$ be such that $x(v_1) \geq x(v_2) \geq \cdots \geq x(v_n)$. For $0 \leq i \leq n$, let $S_i = \{ v_1, v_2, \cdots, v_i\}$. 
Let $\{\lambda_i\}_{i=0}^{n}$ be the unique coefficients with $\lambda_i \geq 0$, and 
$\sum_i \lambda_i = 1$ such that:
\[
x = \sum_{i=0}^{n} \lambda_i{1}_{S_i}
\]
It is easy to see that $\lambda_n = x(v_n)$, and for $0\leq i < n$, we have 
$\lambda_i = x(v_i) - x(v_{i+1})$ and $\lambda_0 = 1 - x(v_1)$. The {\em value} of the 
Lov\'{a}sz Extension of $f$ at $x$ is defined as 
\[
\cL_f(x) = \sum_i \lambda_i f(S_i)
\]
\end{definition}

For motivation behind the definition, refer to the excellent survey on submodular functions by 
Dughmi~\cite{dughmi09}. 

The Lov\'{a}sz Extension enjoys the following properties:
\begin{itemize}
\item $\cL_f$ is concave iff $f$ is supermodular.
\item If $f$ is supermodular, the maximum value of $f(S)$ is the same as the maximum value of 
$\cL_f(x)$.
\item Restricted to the subspace $x_1 \geq x_2 \geq \cdots \geq x_n$, 
the function $\cL_f$ is {\em linear}.
\end{itemize}

We are now ready to describe our convex program $\mathrm{CP}$ for computing the densest subset of the universe $U$  subject to {\em dependency constraints}. For details on convex programming, one may consult the text~\cite{bv-book}.

The program has variables $x_1, x_2, \cdots, x_n$ corresponding to the elements $i \in U$.
Since $f(S)$
is supermodular, the corresponding Lov\'{a}sz Extension $\cL_{E}(x)$ is {\em concave}.

\begin{center}
\[
\mathrm{CP:}
\begin{array}{cll}
\max &{\displaystyle \cL(x) }\\[1.0em]
{\rm s.t.} &{\displaystyle 
\langle x, \vec{1}\rangle = 1}& \\[0.25em]
&{\displaystyle x_i \leqslant x_j}&{\rm for~all~} (i,j) \in \vec{A}\\[0.25em]
&{\displaystyle
x_i\geqslant 0} & {\rm for~all~} i \in V(G)\ .
\end{array}
\]
\end{center}

This convex programming problem can be solved to arbitrary precision in polynomial time by 
the ellipsoid method (see \cite{gls81}). 

As in the first proof above, we will relabel the elements of the universe so that 
$x^{*}_1 \geqslant x^{*}_2 \geqslant \cdots \geqslant x^{*}_n$. But now, by the property of the 
Lov\'{a}sz Extension, we see that $\cL(x)$ is a {\em linear} function in this subspace. 

Now, the rest of the first proof carries over and gives us the result for 
arbitrary monotone supermodular $f$. 

\subsection{Proof of Corollary~\ref{BBB}}
\label{cor.proof}
There are many ways to see this. One way is to consider the convex program above
for the Lov\'{a}sz Extension of the monotone supermodular function $f$.

\[
\mathrm{CP_2:}
\begin{array}{cll}
\max &{\displaystyle \cL(x) }\\[1.0em]
{\rm s.t.} &{\displaystyle 
\langle x, \vec{1}\rangle = 1}& \\[0.25em]
&{\displaystyle x_i \geqslant x_{i+1}}&{\rm for~all~} i \in \{1, 2, \cdots, (n-1)\}\\[0.25em]
&{\displaystyle
x_n\geqslant 0} &\ .
\end{array}
\]
As in the proof above, we can see that this has solutions $x^{*}$ where all the nonzero $x_i$'s are equal, 
and that this corresponds to choosing a subset $S$ so that $\cL(x^{*}) = f(S)/\crd{S}$. Thus, we see that 
$\max_S f(S)/\crd{S}$ is computable in polynomial time.

Yet another way of verifying Corollary~\ref{BBB} is to consider the sequence of  functions
$g(\alpha, S) \triangleq f(S) - \alpha\crd{S}$ (for fixed $\alpha \geqs 0$). Note that each $g(\alpha, S)$ is 
supermodular for any fixed $\alpha$, and so can be maximized in polynomial time. Also observe that if 
$\max_S f(S)/\crd{S} \geqs \alpha$ for some $\alpha$, then $\max_S g(\alpha, S) \geqs 0$. Conversely, 
if $\max_S f(S)/\crd{S} \leqs \alpha$ then $\max_S g(\alpha, S) \leqs 0$. Thus, we can find 
$\max_S f(S)/\crd{S}$ by a binary search over $\alpha$ and maximizing the corresponding functions $g(\alpha,S)$. 

\section{Proof of Theorem~\ref{combo}}
\label{app:combo}
To fix the notation, in this problem, we are given a monotone supermodular function $f$ over a universe $U$, a matroid $\calM = (U, \calI)$, and a set $A \subseteq U$. 

The only modification that we have to make to Algorithm~\ref{algo} is that we will choose the first set $H_1$ such that $H_1$ maximizes the density of all subsets that contain the set $A$. Note that we can do this in polynomial time. Apart from this, the construction of the sets $H_2, H_3, \cdots, H_L$ and the sets $D_1, D_2, \cdots, D_L$ are the same as in Algorithm~\ref{algo}. So, each $D_i$ contains $A$, and the candidate feasible solutions $D'_i$ also contain $A$. The analysis of the modified algorithm is the same as in Section~\ref{proofmain}. Thus we obtain a $2$-approximation algorithm as promised in Theorem~\ref{combo}.

\section{Open Problems}
One interesting open direction is to investigate the maximization of density functions subject to combinations of constraints. In this paper, we consider the combination of a single matroid and a subset constraint. In general, one could ask similar questions about combinations of multiple matroid constraints or a matroid and a dependency constraint for instance. Another open question is to derive a $\lp$-based technique to prove the  result in Theorem~\ref{main.thm}.
\bibliographystyle{plain}
\bibliography{fst}
\appendix

\end{document}